\setlist[description]{leftmargin=\parindent,labelindent=\parindent}
\newtheorem{theorem}{Theorem}[section]
\newtheorem*{theorem*}{Theorem}
\newtheorem*{lemma*}{Lemma}
\newaliascnt{lemma}{theorem}
\newtheorem{lemma}[lemma]{Lemma}
\crefname{lemma}{Lemma}{Lemmas}
\newaliascnt{claim}{theorem}
\newtheorem{claim}[claim]{Claim}
\crefname{claim}{Claim}{Claims}
\newaliascnt{corollary}{theorem}
\newtheorem{corollary}[corollary]{Corollary}
\crefname{corollary}{Corollary}{Corollaries}
\newaliascnt{construction}{theorem}
\crefname{construction}{Construction}{Constructions}
\newaliascnt{fact}{theorem}
\crefname{fact}{Fact}{Facts}
\newaliascnt{proposition}{theorem}
\newtheorem{proposition}[proposition]{Proposition}
\crefname{proposition}{Proposition}{Propositions}
\newaliascnt{conjecture}{theorem}
\crefname{conjecture}{Conjecture}{Conjectures}
\newaliascnt{definition}{theorem}
\newtheorem{definition}[definition]{Definition}
\crefname{definition}{Definition}{Definitions}
\newaliascnt{notation}{theorem}
\crefname{notation}{Notation}{Notation}
\newaliascnt{remark}{theorem}
\newtheorem{remark}[remark]{Remark}
\crefname{remark}{Remark}{Remarks}
\newaliascnt{observation}{theorem}
\crefname{observation}{Observation}{Observations}
\crefname{algorithm}{Algorithm}{Algorithms}
\newcommand{\N}{\ensuremath{\mathbb{N}}}
\newcommand{\negl}{\operatorname{\mathsf{negl}}}
\def\M{\mathcal{M}}
\newcommand{\SD}[2]{\operatorname{SD}\left(#1,#2\right)}
\begin{document}

\title{Stronger Lower Bounds for Online ORAM\thanks{This research was supported in part by the Grant Agency of the Czech Republic under the grant agreement no. 19-27871X, by the Charles University projects PRIMUS/17/SCI/9 and UNCE/SCI/004, Charles University grant SVV-2017-260452, and by the Neuron Fund for the support of science.}}

\author{Pavel Hubáček}
\author{Michal Koucký}
\author{Karel Král}
\author{Veronika Slívová}
\affil{Computer Science Institute of Charles University, Prague, Czech Republic \\
	\texttt{\{hubacek, koucky, kralka, slivova\}@iuuk.mff.cuni.cz}}
\date{}


\maketitle

\begin{abstract}
Oblivious RAM (ORAM), introduced in the context of software protection by Goldreich and Ostrovsky [JACM'96], aims at obfuscating the memory access pattern induced by a RAM computation.
Ideally, the memory access pattern of an ORAM should be independent of the data being processed.
Since the work of Goldreich and Ostrovsky, it was believed that there is an inherent $ \Omega(\log n) $ bandwidth overhead in any ORAM working with memory of size $ n $.
Larsen and Nielsen [CRYPTO'18] were the first to give a general $ \Omega(\log n) $ lower bound for any \emph{online} ORAM, i.e., an ORAM that must process its inputs in an online manner. 

In this work, we revisit the lower bound of Larsen and Nielsen, which was proved under the assumption that the adversarial server knows exactly which server accesses correspond to which input operation.
We give an $\Omega(\log n) $  lower bound for the bandwidth overhead of any online ORAM even when the adversary has no access to this information. For many known constructions of ORAM this information
is provided implicitly as each input operation induces an access sequence of roughly the same length.
Thus, they are subject to the lower bound of Larsen and Nielsen.
Our results rule out a broader class of constructions and specifically, they imply that obfuscating the boundaries between the input operations does not help in building a more efficient ORAM.

As our main technical contribution and to handle the lack of structure, we study the properties of \emph{access graphs} induced naturally by the memory access pattern of an ORAM computation.
We identify a particular graph property that can be efficiently tested and that all access graphs of ORAM  computation must satisfy with high probability.
This property is reminiscent of the Larsen-Nielsen property but it is substantially less structured; that is, it is more generic.

\end{abstract}

\newpage

\section{Introduction}\label{sec:Introduction}
Oblivious simulation of RAM machines, initially studied in the context of software protection by Goldreich and Ostrovsky~\cite{GoldreichO96}, aims at protecting the memory access pattern induced by computation of a RAM from an eavesdropper.
In the present day, such oblivious simulation might be needed when performing a computation in the memory of an untrusted server.\footnote{Protecting the memory access of a computation is particularly relevant in the light of the recent Spectre~\cite{KocherGGHHLMPSY18} and Meltdown~\cite{Lipp0G0HFHMKGYH18} attacks.}
Despite using encryption for protecting the content of each memory cell, the memory access pattern might still leak sensitive information.
Thus, the memory access pattern should be \emph{oblivious} of the data being processed and, optimally, depend only on the size of the input.

\paragraph{Constructions.}
The strong guarantee of obliviousness of the memory access pattern comes at the cost of additional overhead.
A trivial solution which scans the whole memory for each memory access induces linear \emph{bandwidth overhead}, i.e., the multiplicative factor by which the length of a memory access pattern increases in the oblivious simulation of a RAM with $ n $ memory cells.
Given its many practical applications, an important research direction is to construct an ORAM with as low overhead as possible.
The foundational work of Goldreich and Ostrovsky~\cite{GoldreichO96} already gave a construction with bandwidth overhead $ O(\log^3(n)) $.
Subsequent results introduced various improved approaches for building ORAMs (see~\cite{Ajtai10,ChungLP14,ChungP13a,DamgardMN11,GentryGHJ0W13,GoldreichO96,GoodrichM11,GoodrichMOT11,KushilevitzLO12,PatelP0Y18,RenFKSSDD14,StefanovDSCFRYD18,WangCS15,WangHCSS14} and the references therein) leading to the recent construction of Asharov~et~al.~\cite{AsharovKLNS18} with bandwidth overhead $ O(\log n)$ for the most natural setting of parameters.

\paragraph{Lower-bounds.} 
It was a folklore belief that an $ \Omega(\log n )$ bandwidth overhead is inherent based on a lower bound presented already in the initial work of Goldreich and Ostrovsky~\cite{GoldreichO96}.
However, the Goldreich-Ostrovsky result was recently revisited in the work of Boyle and Naor~\cite{BoyleN16}, who pointed out that the lower bound actually holds only in a rather restricted ``balls and bins'' model where the ORAM is not allowed to read the content of the data cells it processes.
In fact, Boyle and Naor showed that any general lower bound for \emph{offline} ORAM (i.e., where each memory access of the ORAM can depend on the whole sequence of operations it needs to obliviously simulate) implies non-trivial lower bounds on sizes of sorting circuits which seem to be out of reach of the known techniques in computational complexity.
The connection between offline ORAM lower bounds and circuit lower bounds was extended to \emph{read-only online} ORAMs (i.e., where only the read operations are processed in online manner) by Weiss and Wichs~\cite{WeissW18} who showed that lower bounds on bandwidth overhead for read-only online ORAMs would imply non-trivial lower bounds for sorting circuits or locally decodable codes.

The first general $ \Omega(\log n) $ lower bound for bandwidth overhead in \emph{online} ORAM (i.e., where the ORAM must process sequentially the operations it has to obliviously simulate) was given by Larsen and Nielsen~\cite{LarsenN18}.
The core of their lower bound comprised of adapting the \emph{information transfer} technique of Patrascu and Demaine~\cite{PatrascuD06}, originally used for proving lower bounds for data structures in the cell probe model, to the ORAM setting.
In fact, the lower bound of Larsen and Nielsen~\cite{LarsenN18} for ORAM can be cast as a lower bound for the oblivious Array Maintenance problem and it was recently extended to other oblivious data structures by Jacob et al.~\cite{JacobLN19}.

\subsection{Our Results}\label{sec:OurResults}
In this work, we further develop the information transfer technique of \cite{PatrascuD06} when applied in the context of online ORAMs. 
We revisit the lower bound of Larsen and Nielsen which was proved under the assumption that the adversarial server knows exactly which server accesses correspond to each input operation.
Specifically, we prove a stronger matching lower bound in a relaxed model without any restriction on the format of the access sequence to server memory.

Note that the \cite{LarsenN18} lower bound does apply to the known constructions of ORAMs where it is possible to implicitly separate the accesses corresponding to individual input operations -- since each input operation generates an access sequence of roughly the same length.
However, the \cite{LarsenN18} result does not rule out the possibility of achieving sub-logarithmic overhead in an ORAM which obfuscates the boundaries in the access pattern (e.g. by translating input operations into variable-length memory accesses).
We show that obfuscating the boundaries between the input operations does not help in building a more efficient ORAM.
In other words, our lower bound justifies the design choice of constructing ORAMs where each input operation is translated to roughly the same number of probes to server memory (common to the known constructions of ORAMs).

Besides online ORAM (i.e., the oblivious Array Maintenance problem), our techniques naturally extend to other oblivious data structures and allow to generalize also the recent lower bounds of Jacob~et~al.~\cite{JacobLN19} for oblivious stacks, queues, deques, priority queues and search trees.

For online ORAMs with statistical security, our results are stated in the following informal theorem.

\begin{theorem}[Informal]\label{thm:main_informal}
	Any statistically secure online ORAM with internal memory of size~$m$ has expected bandwidth overhead $\Omega(\log n)$, where $n \geq m^2$ is the length of the sequence of input operations.
	This result holds even when the adversarial server has no information about boundaries between probes corresponding to different input operations.
\end{theorem}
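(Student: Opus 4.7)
The plan is to adapt the information transfer technique of Patrascu--Demaine (as used by Larsen--Nielsen) to a setting where the adversary observes only an unlabeled sequence of server probes, with no annotation indicating which physical probes correspond to which logical input operation. The high-level strategy is: (i) fix a hard input distribution; (ii) define an access graph and an information-transfer quantity on it that does not reference logical boundaries; (iii) upper bound this quantity by $O(N \log N)$ deterministically; and (iv) lower bound it in expectation by $\Omega(n \log n)$, so that $N \ge \Omega(n \log n)$, which yields the claimed bandwidth overhead.

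First, I would use the standard hard instance for oblivious array maintenance: write uniformly random values to all $n$ cells of logical memory and then read the cells in a random order (or interleave writes and reads in a worst-case pattern). Correctness requires each read to recover the corresponding written value, so a large amount of ``information'' must be transferred through the server memory. The internal memory bound $n \ge m^2$ ensures that this information cannot be cached in the client's state and must leave traces in the server probe sequence.

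Second, I would define the access graph $G$ on the vertex set $\{1, \ldots, N\}$ of all server probes by placing an edge between probe $s$ and probe $t$ whenever they access the same server cell consecutively (no intervening probe to that cell). For a pair of disjoint probe-time intervals $I$ and $J$ with $I$ preceding $J$, the information transfer $IT(I,J)$ is the number of edges with one endpoint in $I$ and one in $J$. Choosing a balanced binary tree decomposition of the probe timeline and summing $IT$ over sibling pairs at all levels gives the standard deterministic upper bound $\sum IT \le O(N \log N)$, since each edge is counted at most once per level.

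Third, and this is where my approach departs from Larsen--Nielsen, I would lower bound $\sum IT$ without assuming that the adversary can cut the probe sequence at operation boundaries. The key idea is to replace the Larsen--Nielsen property (which counts edges between probe sets aligned to operation intervals) with a more generic graph property that can be tested just from the unlabeled access graph: roughly, for every balanced partition of the probe timeline into a prefix of length $N/2$ and a suffix of length $N/2$, the number of crossing edges is at least some threshold $\tau$, and likewise recursively on each half. I would then show (a) if the property fails, then $\sum IT$ is too small and we can derive an upper bound on $N$; (b) any ORAM that correctly simulates the hard distribution must satisfy the property with high probability, because for a random balanced split, a constant fraction of read operations in the suffix must retrieve values written in the prefix, and each such retrieval contributes at least one edge crossing the split; and (c) by obliviousness, the distribution of the access graph is statistically close regardless of the input, so the high-probability lower bound transfers to all inputs.

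The main obstacle will be step (b): showing that the edge-count lower bound on a \emph{balanced probe split} (rather than a balanced logical-operation split) still holds for a $1-o(1)$ fraction of splits. Larsen--Nielsen get this essentially for free because they can align the split with operation boundaries. Here, however, nothing prevents the ORAM from clustering many probes near a particular logical operation, in which case a balanced probe split might sit ``inside'' a single operation. I would handle this by arguing that, by obliviousness, the positions of operation boundaries within the probe sequence are (statistically) independent of the input, so averaging over the random input distribution recovers an $\Omega(\log n)$ number of ``good'' split levels at which a constant fraction of the logical read operations in the second half retrieve values written in the first half, forcing the corresponding edges into the access graph regardless of how the boundaries are obfuscated. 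Combining the resulting $\sum IT \ge \Omega(n \log n)$ lower bound with the $O(N \log N)$ upper bound yields $N \ge \Omega(n)$ times the bandwidth overhead $\Omega(\log n)$, as required.
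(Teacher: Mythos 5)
Your high-level blueprint --- define an access graph on the unlabeled probe sequence (your edge rule matches the paper's exactly), identify a graph property that correctness forces, use obliviousness to transfer it to all inputs, and conclude an $\Omega(n\log n)$ lower bound on the number of probes --- is the paper's blueprint, and you correctly locate ``the main obstacle'' in your step~(b). But the fix you propose there does not close the gap. You want to count crossing edges over a \emph{fixed} balanced binary decomposition of the probe timeline and argue that, by obliviousness, the operation boundaries are independent of the input, so averaging over the input distribution recovers enough good split levels. Obliviousness constrains only the \emph{distribution} of the access sequence, not how probes are allotted to operations: an ORAM can \emph{deterministically} cluster probes (say, constant work per operation plus a long tail of dummy probes, or a self-similar schedule that pads at every scale), producing an access sequence identical across all inputs --- so obliviousness holds trivially --- while balanced probe splits at many levels of your tree consistently fall inside stretches that carry no information. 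The compression argument only produces edges between a write block and its matching read block; it gives no control over where probe \emph{medians} land relative to those blocks, and averaging over inputs cannot create control that the ORAM has destroyed deterministically.

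The paper avoids this by making the graph property \emph{existential} instead of tying it to a fixed decomposition: $G$ has an $\ell$-dense $k$-partition if there \emph{exists} a sequence $0=b_0\le m_0\le b_1\le\cdots\le b_k=N$ with at least $\ell$ edges from $[b_i,m_i)$ to $[m_i,b_{i+1})$ for each $i$. This is testable greedily from the unlabeled access graph, so it is a legitimate distinguisher-side quantity; yet when \emph{proving} that the property holds on the hard input $Y_{n,k}$ (which consists of $k$ interleaved write/read blocks), the proof is free to place the cut points at the actual operation boundaries --- the ORAM cannot hide those from the proof, because the test never needs to know them. Two further structural differences are worth noting: the paper uses a \emph{different} hard input $Y_{n,k}$ for each scale $k$, so the compression argument produces a matched write/read block pair at exactly that scale (your single random-write-then-random-read instance does not obviously give this at all scales simultaneously), and instead of binary-tree telescoping the paper's Lemma~3.3 shows via an overlap argument that having $(\ell/k)$-dense $k$-partitions for geometrically spaced values of $k$ --- which need not nest or align --- forces $\Omega(\ell)$ fresh edges per scale. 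Replacing your fixed balanced-tree decomposition with this existential-partition machinery is the missing idea.
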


In the computational setting, we consider two definitions of computational security.
Our notion of \emph{weak computational security} requires that no polynomial time algorithm can distinguish access sequences corresponding to any two input sequences of the same length -- this is closer in spirit to computational security for ORAMs previously considered in the literature.
The notion of \emph{strong computational security} requires computational indistinguishability even when the distinguisher is given the two input sequences together with an access sequence corresponding to one of them.
The distinguisher should not be able to tell which one of the two input sequences produced the access sequence.
Interestingly, our technique (as well as the proof technique of \cite{LarsenN18} in the model with structured access pattern) yields different lower bounds with respect to the two definitions stated in the following informal theorem.

\begin{theorem}[Informal]\label{thm:main_comutational_informal}
	Any weakly computationally secure online ORAM with internal memory of size~$m$ must have expected bandwidth overhead $\omega(1)$.
	Any strongly computationally secure online ORAM with internal memory of size~$m$ must have expected bandwidth overhead $\Omega(\log n)$, where $n \geq m^2$ is the length of the sequence of input operations.
	This result holds even when the adversarial server has no information about boundaries between probes corresponding to different input operations.
\end{theorem}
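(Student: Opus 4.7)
The strategy follows the information-transfer paradigm of~\cite{PatrascuD06} as adapted to ORAM by Larsen and Nielsen~\cite{LarsenN18}, but with all structural information about input-operation boundaries removed. Given an access sequence $y$ of length $N$, I would define the \emph{access graph} $G(y)$ whose vertices are the $N$ probe positions and whose edges join each pair of consecutive probes to the same server cell. Overlay on the probes a balanced binary tree $T$, and for each internal node $v$ of $T$ let $I_v$ denote the number of edges of $G(y)$ whose endpoints lie in different children of $v$. The universal identity $\sum_v I_v \le N$ holds since every edge is charged to its unique least common ancestor. The plan is to identify a graph property $\Pi$ of the form ``$\sum_v I_v$ is large'' that (a)~is testable in polynomial time from $y$, (b)~is forced with high probability by correctness of the ORAM on a carefully chosen hard input distribution, and (c)~forces $N$ to be at least as large as the target lower bound.

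\paragraph{Strong computational security.} For the $\Omega(\log n)$ bound the distinguisher receives both the candidate input $x_b$ and the access sequence $y$. I would pick $x_0$ to be the Larsen--Nielsen hard sequence (write $n$ independent uniformly random values, then read them back in some fixed order) and $x_1$ to be a benign sequence all of whose answers can be precomputed. A probe-indexed adaptation of the information-transfer lemma should then show that, on $x_0$, correctness forces $\mathbb{E}[\sum_v I_v] = \Omega(n\log n)$: at each of the $\log n$ tree levels an $\Omega(n)$ amount of data written in the left subtrees must be recoverable by reads in the right subtrees, and the $m$-bit internal memory can carry only $O(m)$ of that information across any single subtree boundary. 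Combined with $\sum_v I_v \le N$ this forces $N = \Omega(n\log n)$ on $x_0$, whereas $x_1$ can be answered by a short oblivious sequence. The predicate ``$\sum_v I_v \ge c n\log n$'' is computable from $y$ in polynomial time, yielding a polynomial-time distinguisher with noticeable advantage and contradicting strong computational security of any ORAM with sub-logarithmic overhead.

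\paragraph{Weak computational security.} For the $\omega(1)$ bound the distinguisher sees only $y$, so the approach must work without input-conditioned tests. The plan is to combine correctness on $x_0$ with weak security as follows. Any correct ORAM on $x_0$ must produce, with overwhelming probability, an access sequence for which $\sum_v I_v$ is super-linear in $n$, since otherwise too little information crosses the probe-tree to recover the $n$ independent random words written earlier. Weak security then forces the distribution of $y$ on any other equal-length input $x_1$ to be computationally indistinguishable from its distribution on $x_0$, and since $\sum_v I_v$ is a polynomial-time computable statistic of $y$, $\mathbb{E}[\sum_v I_v]$ must be of the same super-linear order on $x_1$; together with $\sum_v I_v \le N$ this gives $N = \omega(n)$ on every input. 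The reason this argument stops short of $\Omega(\log n)$ is that without access to the input the distinguisher cannot condition on which writes are supposed to be recovered by which reads, so only the crudest level of the information-transfer telescoping survives, weakening the quantitative bound from $n\log n$ down to merely super-linear in $n$.

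\paragraph{Main obstacle.} The principal technical difficulty is redoing the information-transfer step when the binary tree is indexed by probes rather than by input operations. In the Larsen--Nielsen structured setting the tree boundaries coincide exactly with operation boundaries, which makes the counting of ``crossing'' information clean. In the unstructured setting the ORAM may use wildly different numbers of probes per operation, so a single operation may straddle many tree intervals and a single tree interval may contain fragments of many operations. Handling this misalignment---likely via a random shift of the tree together with an averaging argument showing that typical alignments preserve the relevant crossing counts up to constant factors---is the heart of the proof, and is also what makes the technique robust enough to deliver the extensions to the oblivious data-structure settings of~\cite{JacobLN19} announced in the introduction.
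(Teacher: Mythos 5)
Your overall plan — access graphs on probes, an information-transfer argument, and turning a testable graph statistic into a distinguisher — lands in the right neighbourhood, and you correctly identify the central obstacle (probe--operation misalignment). But there is a genuine gap: the fix you sketch (a fixed probe-indexed binary tree plus random shifts and averaging) is not what the paper does, and it is not clear it can be made to work. An ORAM is free to spend wildly varying numbers of probes per operation, so no single tree over probes, shifted or not, is guaranteed to align with the block structure that the information-transfer argument needs at every scale simultaneously. The paper sidesteps this entirely by abandoning the tree: it defines, for each $k$, the property of \emph{having some} $(n/5k)$-dense $k$-partition of the probe sequence (an existential property, testable by a greedy algorithm), and it uses a \emph{separate} hard input $Y_{n,k}$ for each $k$, consisting of $k$ interleaved write/read blocks of length $n/2k$. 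Correctness of the ORAM on $Y_{n,k}$ forces, by a compression argument, that the probe intervals corresponding to $W_i$ and $R_i$ already yield a witnessing $k$-partition. Security then transports the property to a single fixed $y$, and a standalone combinatorial lemma (Lemma~3.3) shows that the witnessing edge sets for $\log n$ different powers of $4$ are mostly disjoint, summing to $\Omega(n\log n)$ edges. Because the tester accepts \emph{any} good partition, the ORAM gains nothing from misaligning; this is exactly the robustness your random-shift idea is groping for, and it is achieved by definition rather than by averaging.

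Two further points. First, your claim that correctness on $x_0$ (a single bulk write followed by a bulk read) forces $\sum_v I_v$ to be super-linear is incorrect; that sequence only forces $\Omega(n)$ crossings at the root, and to accumulate $\Omega(n)$ at $\log n$ scales one must invoke correctness on each of the interleaved sequences $Y_{n,k}$, not a single $x_0$. Second, your explanation of why weak security yields only $\omega(1)$ (that the distinguisher ``cannot condition on which writes are recovered by which reads'') misses the actual mechanism. The issue is uniformity: without being handed the input sequences, the distinguisher cannot dynamically select which $k$ to test, so for each fixed $k$ one has a fixed polynomial-time tester. Weak security then implies only that, for that $k$, the tester's advantage is eventually negligible, i.e.\ the $(n/5k)$-dense $k$-partition property holds for all $n \ge n_0(k)$ with no uniform control on $n_0(k)$. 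This gives $k(n)\to\infty$ and hence $\omega(1)$ overhead, but not a rate. The strong-security distinguisher, by contrast, reads the block length off $y'$, computes $k$, and runs the greedy tester, recovering the full $\Omega(\log n)$ bound.
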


Note that even the $\omega(1)$ lower bound for online ORAMs satisfying weak computational security is an interesting result in the light of the work of Boyle and Naor~\cite{BoyleN16}.
It follows from \cite{BoyleN16} that any super-constant lower bound for \emph{offline} ORAM would imply super-linear lower bounds on size of sorting circuits -- which would constitute a major breakthrough in computational complexity (for additional discussion, see Section~\ref{sec:AlternativeDefinitions}).
Our techniques clearly do not provide lower bounds for offline ORAMs.
On the other hand, we believe that proving the $\omega(1)$ lower bound in any meaningful weaker model would amount to proving lower bounds for offline ORAM or read-only online ORAM which would have important implications in computational complexity.

\paragraph{Alternative Definitions of ORAM.}
Previous works considered various alternative definitions of ORAM.
We clarify the ORAM model in which our techniques yield a lower bound in Section~\ref{sec:PrelimOram} and discuss its relation to other models in  Section~\ref{sec:AlternativeDefinitions}.
As an additional contribution, we demonstrate an issue with the definition of ORAM appearing in Goldreich and Ostrovsky~\cite{GoldreichO96}.
Specifically, we show that the definition can be satisfied by a RAM with constant overhead and no meaningful security.
The definition of ORAM in Goldreich and Ostrovsky~\cite{GoldreichO96} differs from the original definition in Goldreich~\cite{Goldreich87} and Ostrovsky~\cite{Ostrovsky90}, which do not share the issue we observed in the definition from Goldreich and Ostrovsky~\cite{GoldreichO96}.
Given that the work of Goldreich and Ostrovsky~\cite{GoldreichO96} might serve as a primary reference for our community, we explain the issue in~Section~\ref{sec:AlternativeDefinitions} to help preventing the use of the problematic definition in future works.

Persiano and Yeo~\cite{PersianoY19} recently adapted the chronogram technique~\cite{FredmanS89} from the literature on data structure lower bounds to prove a lower bound for \emph{differentially private RAMs} (a relaxation of ORAMs in the spirit of differential privacy~\cite{DworkMNS06} which ensures indistinguishability only for input sequences that differ in a single operation).
Similarly to the work of Larsen and Nielsen~\cite{LarsenN18}, the proof in~\cite{PersianoY19} exploits the fact that the distinguisher knows exactly which server accesses correspond to each input operation.
However, as the chronogram technique significantly differs from the information transfer approach, we do not think that our techniques would directly allow to strengthen the \cite{PersianoY19} lower bound for differentially private RAMs and prove it in the model with an unstructured access pattern.

\subsection{Our Techniques}\label{sec:OurTechniques}
The structure of our proof follows a similar blueprint as the work of Larsen and Nielsen~\cite{LarsenN18}.
However, we must handle new issues introduced by the more general adversarial model.
Most significantly, our proof cannot rely on any formatting of the access pattern, whereas Larsen and Nielsen leveraged the fact that the access pattern is split into blocks corresponding to each read/write operation.
To handle the lack of structure in the access pattern, we study the properties of the \emph{access graph} induced naturally by the access pattern of an ORAM computation.
We identify a particular graph property that can be efficiently tested and that all access graphs of ORAM  computation must satisfy with high probability.
This property is reminiscent of the Larsen-Nielsen property but it is substantially less structured; that is, it is more generic.

The access graph is defined as follows: the vertices are timestamps of server probes and there is an edge connecting two vertices if and only if they correspond to two subsequent accesses to the same memory cell. 
We define a graph property called \emph{$\ell$-dense $k$-partition}.
Roughly speaking, graphs with $\ell$-dense $k$-partitions are graphs which may be partitioned into $k$~disjoint subgraphs, each subgraph having at least $\ell$ edges.
We show that this property has to be satisfied (with high probability) by access graphs induced by an ORAM for any $ k $ and an appropriate $ \ell $.
To leverage this inherent structure of access graph towards a lower bound on bandwidth overhead, we prove that if a graph has $\frac{\ell}{k}$-dense $k$-partition for some $\ell$ and $K$ different values of $k$ then the graph must have at least $\Omega(\ell \log K)$ edges.
In Section~\ref{sec:graph_part}, we provide the formal definition of access graph and $\ell$-dense $k$-partitions and prove a lower bound on the expected number of edges for a graph that has many $\ell$-dense $k$-partitions.

In Section~\ref{sec:oram_lb}, we prove that access graphs of ORAMs have many dense partitions.
Specifically, using a communication-type argument we show that for $\Omega(n)$ values of~$k$, there exist input sequences for which the corresponding graph has $\Omega(\frac{n}{k})$-dense $k$-partition with high probability.
Applying the indistinguishability of sequences of probes made by ORAM, we get one sequence for which its access graph satisfies $\frac{n}{k}$-dense $k$-partition for $\Omega(n)$ values of $k$ with high probability.
Combining the above results from Section~\ref{sec:oram_lb} with the results from Section~\ref{sec:graph_part}, we get that the graph of such a sequence has $\Omega(n \log n)$ edges, and thus by definition, $\Omega(n \log n)$ vertices in expectation.  
This implies that the expected number of probes made by the ORAM on any input sequence of length $n$ is $\Omega(n \log n)$.

\section{Preliminaries}\label{sec:Preliminaries}

In this section, we introduce some basic notation and recall some standard definitions and results.
Throughout the rest of the paper, we let $[n]$ for $n\in\N$ to denote the set $\left\{ 1,2, \dots , n \right\}$.
A function $\negl(n)\colon \mathbb{N} \rightarrow \mathbb{R}$ is \emph{negligible} if it approaches zero faster than any inverse polynomial.

\begin{definition}[Statistical Distance]\label{def:statistical_distance}
	For two probability distributions $X$ and $ Y$ on a discrete universe $S$, we define \emph{statistical distance of $ X $ and $ Y $}  as
	$$\SD{X}{Y} = \frac{1}{2} \sum_{s \in S} | \Pr[X = s] - \Pr[Y = s]|\ .$$
\end{definition}

We use the following observation, which characterizes statistical distance as the difference of areas under the curve (see~Fact~3.1.9 in Vadhan~\cite{Vadhan99}).

\begin{proposition}\label{prop:StatDistanceAsArea}
Let $X$ and $Y$ be probability distributions on a discrete universe $S$, let $S_X=
\{s\in S\colon \Pr[X=s]>\Pr[Y=s]\}$, and define $S_Y$ analogously.
Then
$$
\SD{X}{Y}=\Pr[X\in S_X]-\Pr[Y\in S_X]=\Pr[Y\in S_Y]-\Pr[X\in S_Y]\ .
$$
\end{proposition}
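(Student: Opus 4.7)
The plan is to start directly from the definition of statistical distance and partition the universe $S$ according to the sign of $\Pr[X=s]-\Pr[Y=s]$. Concretely, I would set $S_0 = \{s\in S : \Pr[X=s]=\Pr[Y=s]\}$, so that $S = S_X \sqcup S_Y \sqcup S_0$, and then split the sum in \Cref{def:statistical_distance} across these three pieces. The contribution from $S_0$ vanishes, on $S_X$ the absolute value equals $\Pr[X=s]-\Pr[Y=s]$, and on $S_Y$ it equals $\Pr[Y=s]-\Pr[X=s]$. Summing termwise and using the shorthand $\Pr[X\in T]=\sum_{s\in T}\Pr[X=s]$, this yields
\begin{equation*}
\SD{X}{Y} = \tfrac{1}{2}\bigl(\Pr[X\in S_X]-\Pr[Y\in S_X]\bigr) + \tfrac{1}{2}\bigl(\Pr[Y\in S_Y]-\Pr[X\in S_Y]\bigr).
\end{equation*}

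The second step is to show that the two parenthesized quantities are in fact equal, each therefore coinciding with the full statistical distance (not just half of it). For this I would use that both $X$ and $Y$ are probability distributions, so $\Pr[X\in S_X]+\Pr[X\in S_Y]+\Pr[X\in S_0] = 1 = \Pr[Y\in S_X]+\Pr[Y\in S_Y]+\Pr[Y\in S_0]$. By the definition of $S_0$, the values $\Pr[X\in S_0]$ and $\Pr[Y\in S_0]$ agree, so subtracting the two equalities gives
\begin{equation*}
\Pr[X\in S_X]-\Pr[Y\in S_X] = \Pr[Y\in S_Y]-\Pr[X\in S_Y].
\end{equation*}
Substituting this equality back into the displayed expression for $\SD{X}{Y}$ collapses the average of two equal quantities to either one of them, giving the claim.

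This proof has no real obstacle; the only thing to be careful about is setting up the partition cleanly (in particular, explicitly handling the ties in $S_0$, since otherwise the two expressions for $\SD{X}{Y}$ could look asymmetric). Everything else is a direct manipulation of the definition together with the fact that probability mass functions sum to one.
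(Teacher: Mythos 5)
Your proof is correct. The paper itself does not prove \cref{prop:StatDistanceAsArea}; it simply cites Fact~3.1.9 of Vadhan's thesis, and your argument is the standard one for that fact: partition $S$ into $S_X$, $S_Y$, $S_0$, observe that the ties in $S_0$ contribute nothing and cancel in the normalization identity, and conclude that the two signed differences are equal, each therefore equaling the full statistical distance.
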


We also use the following data-processing-type inequality.

\begin{proposition}\label{prop:stat_dif}
Let $X$ and $Y$ be probability distributions on a discrete universe $S$.
Then for any function $f\colon S \rightarrow \left\{ 0,1 \right\}$, it holds that $|\Pr[f(X) = 1] - \Pr[f(Y) = 1]| \leq \SD{X}{Y}$.
\end{proposition}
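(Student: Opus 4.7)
The plan is to apply \cref{prop:StatDistanceAsArea} directly by identifying the preimage $A = f^{-1}(1) \subseteq S$ as the event governing $\Pr[f(X)=1]$ and $\Pr[f(Y)=1]$. Since $f$ is $\{0,1\}$-valued, we simply have $\Pr[f(X)=1] = \Pr[X \in A]$ and $\Pr[f(Y)=1] = \Pr[Y \in A]$, so the statement reduces to showing that for every subset $A \subseteq S$, $|\Pr[X \in A] - \Pr[Y \in A]| \leq \SD{X}{Y}$.

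To establish this, I would split into two cases based on the sign of $\Pr[X \in A] - \Pr[Y \in A]$. In the case where $\Pr[X \in A] \geq \Pr[Y \in A]$, observe that only the elements $s \in A$ with $\Pr[X=s] > \Pr[Y=s]$ can contribute positively to the difference, while elements with $\Pr[X=s] \leq \Pr[Y=s]$ can only decrease it. Writing $\Pr[X \in A] - \Pr[Y \in A] = \sum_{s \in A}(\Pr[X=s] - \Pr[Y=s])$, I would discard the (non-positive) contributions from $A \setminus S_X$ and then enlarge the index set from $A \cap S_X$ up to all of $S_X$ (which only adds non-negative terms), obtaining
\[
\Pr[X \in A] - \Pr[Y \in A] \;\leq\; \Pr[X \in S_X] - \Pr[Y \in S_X] \;=\; \SD{X}{Y},
\]
where the last equality is precisely \cref{prop:StatDistanceAsArea}. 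The case $\Pr[Y \in A] \geq \Pr[X \in A]$ is symmetric and uses the set $S_Y$ together with the second equality of \cref{prop:StatDistanceAsArea}.

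There is no real obstacle here: the statement is a standard data-processing inequality for statistical distance, and \cref{prop:StatDistanceAsArea} has been set up precisely so that the extremal sets $S_X$ and $S_Y$ dominate the probability gap over any $A$. The only point requiring a small bit of care is making sure the two directions of the absolute value are handled symmetrically, which is exactly why the two-sided form of \cref{prop:StatDistanceAsArea} is convenient to invoke.
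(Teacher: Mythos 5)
Your proof is correct. The paper itself states \cref{prop:stat_dif} without proof, treating it as a standard data-processing fact, and the argument you give---reducing to $A=f^{-1}(1)$, bounding $\Pr[X\in A]-\Pr[Y\in A]$ by passing to $A\cap S_X$ and then enlarging to $S_X$, and invoking \cref{prop:StatDistanceAsArea} for the extremal value, with the symmetric case handled via $S_Y$---is exactly the canonical derivation the authors had in mind and fits cleanly with the setup they provide.
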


\begin{definition}[Computational indistinguishability]\label{def:ComputationalIndistinguishability}
	Two probability ensembles, $\left\{ X_n \right\}_{n\in\N}$ and  $\left\{ Y_n \right\}_{n\in\N}$,  are \emph{computationally indistinguishable} if for every polynomial-time algorithm $D$ there exists a negligible function $ \negl(\cdot) $ such that
	$$
	\left| \Pr[D(X_n,1^n) = 1] - \Pr[D(Y_n,1^n) = 1] \right| \leq \negl(n)\ .
	$$
\end{definition}

\subsection{Online ORAM}\label{sec:PrelimOram}

In this section, we present the formal definition for online oblivious RAM (ORAM) we consider in our work -- we build on the oblivious cell-probe model of Larsen and Nielsen~\cite{LarsenN18}.

\begin{definition}[Array Maintenance Problem~\cite{LarsenN18}]
	The \emph{Array Maintenance problem with parameters $(\ell, w)$} is to maintain an array $B$ of $\ell$ $w$-bit entries under the following two operations:
	\begin{itemize}
		\item  \emph{($W, a$, $d$)}: Set the content of $B[a]$ to $d$, where $a \in [\ell]$, $d \in \left\{ 0,1 \right\}^{w}$.
		    \hfill \emph{(Write operation)}
		\item  \emph{($R, a$, $d$)}: Return the content of $B[a]$, where $a \in [\ell]$ (note that $d$ is ignored).
		    \hfill \emph{(Read operation)}
	\end{itemize}
	We say that a machine $\M$ \emph{implements the Array Maintenance problem with parameters $(\ell, w)$ and probability $p$}, if for every input sequence of operations
	$$
		y = (o_1, a_1, d_1), \dots , (o_n, a_n, d_n) \text{, where each } o_i \in \left\{ R,W \right\}, a_i \in [\ell], d_i \in \left\{ 0,1 \right\}^w,
	$$
	and for every read operation in the sequence $y$, $\M$ returns the correct answer with probability at least $p$.
	\label{def:array_maintenance}
\end{definition}

\begin{definition}[Online Oblivious RAM]\label{def:oram}
	For $m, w\in\N$, let RAM*$(m, w)$ denote a probabilistic random access machine $\M$ with $m$ cells of internal memory, each of size $w$ bits, which has access to a data structure, called \emph{server}, implementing the Array Maintenance problem with parameters $(2^w, w)$ and probability 1.
	In other words, in each step of computation $\M$ may \emph{probe} the server on a triple $(o, a, d)\in \left\{ R,W \right\}\times [2^w]\times \left\{ 0,1 \right\}^w$ and on every input $(R, a, d)$ the server returns to $\M$ the data last written in $B[a]$.
	We say that $RAM^*$ \emph{probes} the server whenever it makes an Array Maintenance operation to the server.

	Let $m, M, w$ be any natural numbers such that $M \leq 2^w$.
	An \emph{online Oblivious RAM~$\M$ with address range $M$, cell size $w$ bits and $m$ cells of internal memory} is a $RAM^*(m, w)$ satisfying online access sequence, correctness, and statistical (resp. computational) security as defined below.

   \begin{description}
    \item[Online Access Sequence:]
    For any input sequence $y = y_1, \dots, y_n$ the RAM* machine $\M$ gets $y_i$ one by one, where each $y_i \in \left\{ R, W \right\} \times [M] \times \left\{ 0, 1 \right\}^w$.
		Upon the receipt of each operation $y_i$, the machine $\M$ generates a possibly empty sequence of \emph{server probes} $(o_1, a_1, d_1), \dots ,(o_{\ell_i}, a_{\ell_i}, d_{\ell_i})$, where each $(o_i, a_i, d_i) \in \left\{ R, W \right\} \times [2^w] \times \left\{ 0,1 \right\}^w$, and updates its internal memory state in order to correctly implement the request~$y_i$.
		We define the \emph{access sequence corresponding to $ y_i $} as $A(\M, y_i) = a_1, a_2, \ldots, a_{\ell_i}$.
    For the input sequence $y$, the \emph{access sequence} $A(\M, y)$ is defined as $$A(\M, y) = A(\M, y_1),A(\M, y_2),A(\M, y_3), \ldots , A(\M, y_n).$$

		Note that the definition of the machine $\M$ is \emph{online}, and thus for each input sequence $y = y_1, \dots, y_n$ and each $i \in [n-1]$, the access sequence $A(\M, y_i)$ does not depend on $y_{i+1},\dots,y_n$.

	\item[Correctness:]
	$\M$ implements the Array Maintenance problem with parameters $(M, w)$ with probability at least~$1 - p_{\mathrm{fail}}$.

	\item[Statistical Security:]
	For any two input sequences $y, y'$ of the same length,  the statistical distance of the distributions of access sequences $A(\M, y)$ and $A(\M, y')$ is at most $\frac{1}{4}$.
	
	\item[Computational Security:]
	For computational security, we consider infinite families of ORAM where we allow  $m, M, w$ to be functions of the length $n$ of the input sequence.	
    We distinguish between the following two notions:
    
    \begin{description}
    \item[Weak Computational Security:]\label{def:comp_sec_weak}
	For any infinite families of input sequences $\{y_n\}_{n\in\N}$ and $\{y'_n\}_{n\in\N}$ such that  $|y_n|= |y'_n|\ge n$ for all $ n\in\N $, the probability ensembles $\{A(\M, y_n)\}_{n\in\N}$ and $\{A(\M, y'_n)\}_{n\in\N}$ are computationally indistinguishable. 
	
	\item[Strong Computational Security:]\label{def:comp_sec_strong}
	For any infinite families of input sequences $\{y_n\}_{n\in\N}$ and $\{y'_n\}_{n\in\N}$ such that  $|y_n|= |y'_n|\ge n$ for all $ n\in\N $, the probability ensembles $\{(y_n, y'_n, A(\M, y_n))\}_{n\in\N}$ and
	$\{(y_n, y'_n, A(\M, y'_n))\}_{n\in\N}$ are computationally indistinguishable. 
    \end{description}
    
\end{description}

\end{definition}

The parameters of our ORAM model from Definition~\ref{def:oram} are depicted in Figure~\ref{fig:oram}.
We use different sizes of arrows on server and RAM side to denote the asymmetry of the communication (the RAM sends type of operation, address, and data and the server returns requested data in case of a read operation and dummy value in case of a write operation).
Note that the input sequence $y$ of ORAM consists of a sequence of all operations, whereas the access sequence $A(\M, y)$ consists of a sequence of addresses of all probes.

\begin{figure}
    \begin{center}
    \includegraphics[width=0.99\textwidth]{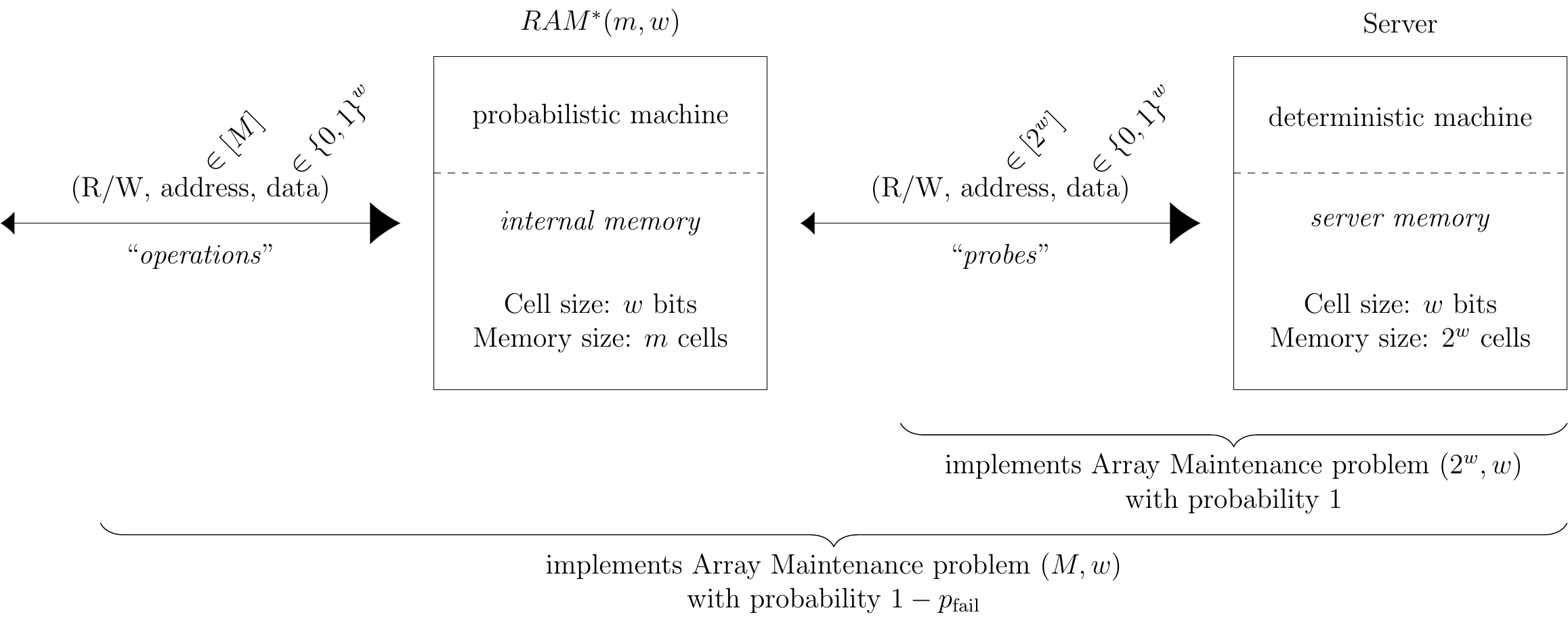}
    \caption{Schema of online ORAM from Definition~\ref{def:oram}.}
    \end{center}
    \label{fig:oram}
\end{figure}

Arguably, a user of an ORAM might want the stronger notion of computational security whereas the weaker notion is closer to the past considerations.
Note that in the case of weak computational security, the adversarial distinguisher does not have access to the input sequences.
Thus, it is restricted to contain only constant amount of information about the whole families of input sequences $\{y_n\}_n$ and $\{y'_n\}_n$.
In contrast, in the case of strong computational security, the adversarial distinguisher is given also the input sequences.
Thus, it is able to compute any polynomial time computable information about the input sequences.
This distinction is crucial for our results, as we are able to prove only an $\omega(1)$ lower bound for weak security as opposed to the $\Omega(\log n)$ lower bound for strong security (see Theorem~\ref{thm:mainomega} and Theorem~\ref{thm:main_strong_computational}).
Nevertheless, we believe that the known constructions of ORAM satisfy the notion of strong computational security.

For ease of exposition, in the rest of the paper we assume perfect correctness of the ORAM (i.e., $p_{\mathrm{fail}} = 0$).
However, our lower bounds can be extended also to ORAMs with imperfect correctness (see Remark~\ref{rem:error_correcting_codes}).
Finally, our lower bounds hold also for \emph{semi-offline} ORAMs where the ORAM machine $\M$ receives the type and address of each operation in advance and it has to process in online manner only the data to be written during each write operation (see Remark~\ref{rem:semioffline}).


\section{Dense Graphs}\label{sec:graph_part}

In this section, we define an efficiently testable property of graphs that we  show to be satisfied by graphs induced by the access pattern of any statistically secure ORAM.
This property implies that the overhead of such ORAM must be logarithmic.

We say a directed graph $G=(V,E)$ is {\em ordered} if $V$ is a subset of integers and for each edge $(u,v)\in E$, $u<v$.
For a graph $G = (V, E)$ and $S,T\subseteq V$, we let $E(S,T) \subseteq E$ be the set of edges that start in $S$ and end in $T$,
and for integers $a\le m \le b \in V$ we let $E(a,m,b)=E(\{a,a+1,\dots,m-1\},\{m,m+1,\dots,b-1\})$.

\begin{definition}
	A \emph{$k$-partition} of an ordered graph $G = (V = \{0, 1, 2, \dots, N-1\}, E)$ is a sequence $0=b_0\le m_0 \le b_1 \le m_1 \le \cdots \le b_k = N$. We say that the $k$-partition is $\ell$-dense if for each $i\in \{0,\dots,k-1\}$, $E(b_i,m_i,b_{i+1})$ is of size
	at least $\ell$.
	\label{def:dense_block_partition}
\end{definition}

There is a simple greedy algorithm running in time $\mathcal{O}(|V|^2 \cdot |E|)$ which tests for given integers $k, \ell$ whether a given ordered graph $G = (V, E)$ has an $\ell$-dense $k$-partition. (The algorithm looks for the $k$ parts one by one greedily from left to right.)

\begin{lemma}
	Let $K\subseteq \N$ be a subset of powers of 4. Let $\ell \in \N$ be given. Let $G = (\{0,\dots,N-1\}, E)$ be an ordered graph which for each $k\in K$
	has an $(\ell/k)$-dense $k$-partition. Then $G$ has at least $\frac{\ell}{2}\cdot |K|$ edges.
	\label{lemma:partitioning_implies_enough_edges}
\end{lemma}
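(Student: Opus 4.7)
The plan is a double-counting argument over edge/level incidences. For each $k\in K$, let $S_k = \bigcup_{i=0}^{k-1} E(b_i^{(k)},m_i^{(k)},b_{i+1}^{(k)})\subseteq E$ denote the set of edges witnessed by the given $(\ell/k)$-dense $k$-partition. Since the $k$ parts contribute pairwise disjoint edge sets and each contributes at least $\ell/k$ edges,
\[
\sum_{k\in K}|S_k|\ \ge\ \ell\cdot|K|.
\]
The lemma will then follow from the complementary upper bound $\sum_{k\in K}|S_k|\le 2|E|$, which I aim to establish by showing that every edge belongs to at most two of the sets $S_k$.

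To prove this per-edge bound, fix an edge $e=(u,v)$ of length $d=v-u$. If $e\in S_k$, then the interval of the $k$-partition that contains $e$ has length at least $d+1$, while each of the remaining $k-1$ intervals must still accommodate at least $\ell/k$ midpoint-crossing edges. Since an interval of length $L$ admits at most $\lfloor L/2\rfloor\lceil L/2\rceil\le L^2/4$ midpoint-crossing edges, every other interval has length at least $2\sqrt{\ell/k}$, and summing the length requirements against the ambient budget $N$ gives
\[
2(k-1)\sqrt{\ell/k}+(d+1)\ \le\ N.
\]
Because $K$ consists of powers of $4$, passing from $k$ to $4k$ doubles $\sqrt{k}$ and roughly doubles the total length required by the "other" intervals. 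A careful geometric analysis along these lines should force at most two admissible levels in $K$ for any fixed edge $e$.

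The main obstacle is sharpening this budget analysis to the tight bound of $2$: the crude inequality above bounds only the maximum admissible level and in principle allows a full logarithmic range of levels in $K$ for a given edge. To close this gap, I plan to compare, for two admissible levels $k_1<k_2\in K$, the intervals of the two partitions that actually contain $e$; both contain $[u,v]$, so they overlap, and subtracting the shared segment from the length budgets of both partitions (together with the factor-$4$ jump $k_2\ge 4k_1$) should exclude a third admissible level $k_3$. If this direct argument proves too delicate, the fallback is induction on $|K|$: remove the smallest element $k^*\in K$ and argue, via the same interval-length analysis, that $|S_{k^*}\setminus\bigcup_{k\in K\setminus\{k^*\}}S_k|\ge \ell/2$, so that each element of $K$ forces at least $\ell/2$ fresh edges into $E$.
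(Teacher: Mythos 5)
Your double-counting plan hinges on the claim that each edge lies in at most two of the sets $S_k$, and this claim is false. Consider a graph with a ``hub'' vertex $p$ into which many short edges $(p-1,p),(p-2,p),\dots$ point, together with auxiliary edges elsewhere so that the density requirements can be met. Then for every $k\in K$ one can choose the $k$-partition so that some part $[b_i,b_{i+1})$ has midpoint exactly $p$ and contains the whole hub; all the hub edges then lie in $S_k$ for every $k$ simultaneously. Your budget inequality only caps the largest admissible $k$ (as you acknowledge, it leaves a logarithmic range open), and the proposed ``compare two levels to exclude a third'' refinement cannot succeed, because nested parts with the same midpoint at arbitrarily many levels are perfectly consistent with density. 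So the upper bound $\sum_k|S_k|\le 2|E|$ does not hold.

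The fallback is also broken, because the induction is run in the wrong direction. You remove the \emph{smallest} $k^*$ and want $|S_{k^*}\setminus\bigcup_{k>k^*}S_k|\ge\ell/2$. Take $K=\{1,4\}$: one part $[b_1,b_2)$ of the $4$-partition can have midpoint $m_1$ equal to the $1$-partition's midpoint $m^{(1)}_0$ and contain every crossing edge of the $1$-partition, so $S_1\subseteq S_4$ and $S_1\setminus S_4=\emptyset$. The correct statement, and the one the paper proves, is the opposite: for each $k$, the set $E_k\setminus\bigcup_{k'\in K,\,k'<k}E_{k'}$ of edges \emph{fresh relative to smaller levels} has size at least $\ell/2$. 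The key observation (the paper's Claim) is that if the $k$-partition's part $[b_i,b_{i+1})$ shares an edge with the $k'$-partition's part $[b'_j,b'_{j+1})$ for $k'<k$, then the smaller partition's midpoint must satisfy $b_i<m'_j<b_{i+1}$, so each part $j$ of the $k'$-partition ``contaminates'' at most one part $i$ of the $k$-partition. Since $K$ consists of powers of $4$, $\sum_{k'<k}k'\le k/2$, so at least $k/2$ parts of the $k$-partition are uncontaminated by all smaller levels, each contributing $\ge\ell/k$ fresh edges, giving $\ge\ell/2$ in total. That directional asymmetry — large levels have many parts and only a few can be hit by midpoints of smaller levels, whereas a small level's single part can be wholly swallowed by a larger level's part — is exactly what your fallback misses.
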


\begin{proof}
	We use the following claim to bound the number of edges.
	\begin{claim}
		Let $k > k'>0$ be integers.
		Let $0=b_0\le m_0 \le b_1 \le m_1 \le \cdots \le b_k = N$ be a $k$-partition of $G$, and 
		$0=b'_0\le m'_0 \le b'_1 \le m'_1 \le \cdots \le b'_{k'} = N$ be a $k'$-partition of $G$. 
		Then for at least $k-k'$ distinct $i\in\{0,\dots,k-1\}$
		\begin{align}
		E(b_i,m_i,b_{i+1}) \cap \bigcup_{j\in \{0,\dots,k'-1\}} E(b'_j,m'_j,b'_{j+1})  = \emptyset.
		\label{cond:alligned}
		\end{align}
		\label{claim:middleOverlay}
	\end{claim}
	\begin{proof}
		For any $j\in \{0,\dots,k'-1\}$ and $(u,v)\in E(b'_j, m'_j,b'_{j+1})$, if $(u,v)\in E(b_i,m_i,b_{i+1})$ for some~$i$ then $b_i < m'_j  < b_{i+1}$ (as $b_i \leq u < m'_j \le v \leq b_{i+1}$.) Thus, $i$ is uniquely determined by $j$. Hence, $E(b_i,m_i,b_{i+1})$ may intersect $\bigcup_{j\in \{0,\dots,k'-1\}} E(b'_j, m'_j, b'_{j+1})$ only if $b_i \le m'_j  < b_{i+1}$, for some $j\in \{0,\dots,k'-1\}$. Thus, such an intersection occurs only for at most $k'$ different $i$. The claim follows.
	\end{proof}
	
	Now we are ready to prove Lemma~\ref{lemma:partitioning_implies_enough_edges}.
	For each $k\in K$, pick an $(\ell/k)$-dense $k$-partition $0=b_0\le m_0 \le b_1 \le m_1 \le \cdots \le b_k = N$ of $G$ and define the set of edges $E_k$:
	$$
	E_k = \bigcup_{i\in \{0,\dots,k-1\}} E(b_i, m_i,b_{i+1}).
	$$
	
	For each $k \in K$, we lower-bound $\left|E_k \setminus \bigcup_{k'\in K,k'<k} E_{k'} \right|$ by $\ell/2$. Since $K$ contains powers of~$4$, 
	$\sum_{k'\in K,k'<k} k' \le k/2$. By the above claim, for at least $k - \sum_{k'\in K,k'<k} k' \ge k/2$ different $i\in \{0,\dots,k-1\}$, 
	$E(b_i, m_i,b_{i+1}) \cap \bigcup_{k'\in K,k'<k} E_{k'} = \emptyset$. 
	By density, $|E(b_i, m_i,b_{i+1})| \ge \ell/k$, so $\left|E_k \setminus \bigcup_{k'\in K,k'<k} E_{k'}\right| \ge \frac{\ell}{k}\cdot \frac{k}{2} = \ell/2$. 
	Hence, $\left|\bigcup_{k\in K} E_k\right| = \sum_{k \in K} \left|E_k \setminus \bigcup_{k'\in K,k'<k} E_{k'}\right| \ge |K| \cdot \frac{\ell}{2}$.
\end{proof}

In the following corollary, we show that the property of having many dense partitions with some probability implies proportionally many edges. (Note that the $\lfloor\log_4 t \rfloor - \lceil\log_4 s \rceil$ term corresponds exactly to the number of powers of four between $s$ and $t$.)

\begin{corollary}
	Let $\ell, s, t$ be natural numbers, where $s \le t$. Let $p \in [0, 1]$ be a real.
	Let $G$ be an ordered graph picked at random from a distribution such that
	for each integer $k$, $s \leq k \leq t$, the randomly chosen ordered graph $G$ has $(\ell/k)$-dense $k$-partition with probability at least $p$.
	Then the expected number of edges in $G$ is at least $\frac{p \ell}{2} \cdot (\lfloor\log_4 t \rfloor - \lceil\log_4 s \rceil)$.
	\label{cor:main_graph_cor}
\end{corollary}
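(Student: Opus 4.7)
The plan is to pass from the deterministic bound in Lemma~\ref{lemma:partitioning_implies_enough_edges} to the probabilistic bound by considering, for each realization of the random graph $G$, the random subset of values of $k$ for which the dense partition exists, and then taking expectations via linearity.

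First I would restrict attention to powers of $4$ inside $[s,t]$, by setting
\[
    K = \{\, 4^i : i \in \N,\ \lceil \log_4 s\rceil \le i \le \lfloor \log_4 t\rfloor \,\}.
\]
Every element of $K$ is a power of $4$ and $|K| \ge \lfloor \log_4 t \rfloor - \lceil \log_4 s \rceil$. Since each $k \in K$ is in particular an integer in $[s,t]$, the hypothesis gives that $G$ has an $(\ell/k)$-dense $k$-partition with probability at least $p$ for every $k \in K$.

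Next, for a sampled graph $G$, define the (random) subset
\[
    K_G = \{\, k \in K : G \text{ has an } (\ell/k)\text{-dense } k\text{-partition}\,\} \subseteq K.
\]
Because $K_G$ consists solely of powers of $4$, Lemma~\ref{lemma:partitioning_implies_enough_edges} applies to $G$ with index set $K_G$ and common density parameter $\ell$, yielding the pointwise bound $|E(G)| \ge \tfrac{\ell}{2}\cdot |K_G|$ on every outcome of $G$ (the bound holds trivially when $K_G = \emptyset$).

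Finally I would take expectations and use linearity, writing $|K_G| = \sum_{k \in K} \mathbf{1}[k \in K_G]$ and applying the hypothesis to each indicator:
\[
    \mathbb{E}\bigl[|E(G)|\bigr] \ \ge\ \frac{\ell}{2} \cdot \mathbb{E}\bigl[|K_G|\bigr]
    \ =\ \frac{\ell}{2} \sum_{k \in K} \Pr[k \in K_G]
    \ \ge\ \frac{\ell}{2} \cdot p \cdot |K|
    \ \ge\ \frac{p\ell}{2}\bigl(\lfloor \log_4 t\rfloor - \lceil \log_4 s\rceil\bigr).
\]
The argument is almost entirely bookkeeping once Lemma~\ref{lemma:partitioning_implies_enough_edges} is in hand. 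The only point that requires some care, and that I would flag as the main conceptual step, is that the events ``$G$ has an $(\ell/k)$-dense $k$-partition'' for different $k$ need not be independent or even correlated; the argument circumvents this by applying the deterministic lemma pointwise to $K_G$ and then using linearity of expectation on the indicators, which needs no independence assumption whatsoever.
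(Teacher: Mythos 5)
Your proof is correct and follows essentially the same argument as the paper: both define the random set of powers of $4$ in $[s,t]$ for which $G$ has the required dense partition, bound its expected size by $p\,(\lfloor\log_4 t\rfloor-\lceil\log_4 s\rceil)$ via linearity of expectation, and apply Lemma~\ref{lemma:partitioning_implies_enough_edges} pointwise. Your writeup is somewhat more explicit about the bookkeeping (separating the deterministic set $K$ from the random subset $K_G$ and flagging that no independence among the events is needed), but the substance is identical.
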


\begin{proof}
	Let $K$ be the set of integers such that $k \in K$ if and only if $k$ is a power of~$4$ and $G$ has an $(\ell/k)$-dense $k$-partition. $K$ is a random variable. The expected 
	size of $K$ is at least $p (\lfloor\log_4 t \rfloor - \lceil\log_4 s \rceil)$. By Lemma~\ref{lemma:partitioning_implies_enough_edges}, the expected number of edges in $G$ is at least $\frac{\ell}{2} \cdot p \cdot (\lfloor\log_4 t \rfloor - \lceil\log_4 s \rceil)$.
\end{proof}

\section{ORAM Lower Bound}\label{sec:oram_lb}

In this section, we fix integers $n,m,M,w \ge 1$ such that $m \leq \sqrt{n}$, $n \leq M \leq 2^w$,  and an ORAM $\M$ with address range $M$, cell size $w$ and $m$ cells of internal memory (see Definition~\ref{def:oram}).
We argue that any statistically secure ORAM $\M$ must make $\Omega(n \log n)$ server probes in expectation in order to implement a sequence of $n$ input operations.
We also show that any ORAM $\M$ satisfying Weak Computational Security must make $\omega(n)$ server probes in expectation on any input sequence of length $n$.

\begin{definition}
	Let $A(\M, y) = a_0, \dots, a_{N-1}$ be an access sequence of $\M$ for some input sequence $y$. 
	We define a directed graph $G(A(\M, y)) = (V, E)$ called \emph{access graph} as follows:
	$V = \{0, \dots, N-1\}$ and $(i, j) \in E$ iff $i < j$ and $a_i = a_j$ and for each $k \in \{i+1, \dots, j-1\}$, $a_k \neq a_i$.
	\label{def:probe_graph}
\end{definition}

Notice that every vertex of an access graph has outdegree as well as indegree at most one.

In the following, we consider input sequences of even length $n\in \N$.
First, we define a sequence of alternating writes and reads at address $a = 1$ with data $d = 0^w$ as $Y_{n,0} = \left[ (W, 1, 0^w), (R, 1, 0^w) \right] ^{n/2}$.
Second, for each $k \in \left\{ 1, 2, \dots, \frac{n}{2} \right\}$, let $\ell = \left\lfloor \frac{n}{2k} \right\rfloor$, we define a distribution $Y_{n,k}$ of input sequences as
\begin{align*}
Y_{n,k} =
&(W, 1, b_{1, 1}), (W, 2, b_{1, 2}), \dots, (W, \ell, b_{1, \ell}), (R, 1, 0^w), (R, 2, 0^w), \dots, (R, \ell, 0^w),\\
&(W, 1, b_{2, 1}), (W, 2, b_{2, 2}), \dots, (W, \ell, b_{2, \ell}), (R, 1, 0^w), (R, 2, 0^w), \dots, (R, \ell, 0^w),\\
&\dots,\\
&(W, 1, b_{k, 1}), (W, 2, b_{k, 2}), \dots, (W, \ell, b_{k, \ell}), (R, 1, 0^w), (R, 2, 0^w), \dots, (R, \ell, 0^w),\\
&(W, 1, 0^w), (R, 1, 0^w), (W, 1, 0^w), \dots, (R, 1, 0^w)\ ,
\end{align*}
where each $b_{i, j} \in \left\{ 0,1 \right\}^w$ is an independently uniformly chosen bit string.
We define the $i$-th block of writes $W_i = (W, 1, b_{i, 1}), (W, 2, b_{i, 2}), \dots, (W, \ell, b_{i, \ell})$
and the $i$-th block of reads $R_i$ to be the sequence of operations $(R, 1, 0^w), (R, 2, 0^w), \dots, (R, \ell, 0^w)$ following right after $W_i$.
Note that after the $k$-th block of reads the sequence is padded to length $n$ by a sequence of alternating writes and reads.
For an ORAM $\M$, we use the notation $G_{n,k} = G(A(\M, Y_{n,k}))$ and $G_{n,0} = G(A(\M, Y_{n,0}))$ when $\M$ is clear from the context.

The following lemma uses only correctness of ORAM and does not depend on its security.
The proof of the lemma uses the information transfer technique similarly to Lemma~2 in~\cite{LarsenN18}.

\begin{lemma}
	Let $n, m, M, w, \M$ be as in the beginning of this section, moreover suppose $n\ge 10$ is an even integer.
	Let $k\ge 1$ be an integer such that $k \le \frac{n}{10(m+2 \log n + 11)}$. Let $A(\M, Y_{n,k})$ be the access sequence of $\M$
	and $G_{n,k}$ be the corresponding access graph. ($G_{n,k}$ is a random variable that depends on $Y_{n,k}$ and the internal randomness of $\M$.)
	With probability at least $1-\frac{1}{n}$, $G_{n,k}$ has $(n/5k)$-dense $k$-partition.
	\label{lem:edges_inside_blocks}
\end{lemma}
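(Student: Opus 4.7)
The plan is to exhibit an explicit $k$-partition whose density bound follows from an information-transfer encoding applied phase by phase. Using the block structure of $Y_{n,k}$, set $b_i$ to be the number of probes $\M$ has made by the time it begins processing the first operation of $W_i$, and $m_i$ to be the number of probes made by the time it begins processing the first operation of $R_i$; set $b_0=0$ and $b_k=N$. The online access condition forces $b_0\le m_0\le b_1\le\dots\le b_k$, so this is a valid $k$-partition of $G_{n,k}$. Unwinding the definition of the access graph, the edges in $E(b_i,m_i,b_{i+1})$ are in bijection with the set $S_i$ of server cells whose last access in $[0,m_i)$ lies in $[b_i,m_i)$ and whose first access in $[m_i,N)$ lies in $[m_i,b_{i+1})$; informally, cells touched while executing $W_i$ and then re-touched while executing $R_i$ (plus padding, for the last phase). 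It therefore suffices to show $\Pr[|S_i|<n/(5k)]\le 1/(nk)$ for every $i$ and take a union bound over the $k$ phases.

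Fix a phase $i$ and condition on $\M$'s internal randomness together with all the random blocks $b_{j,\cdot}$ for $j\neq i$, leaving $b_{i,\cdot}\in\{0,1\}^{\ell w}$ uniform. Encode $b_{i,\cdot}$ by (i) the contents of $\M$'s internal memory at time $m_i$ ($mw$ bits); (ii) a length marker for $|S_i|$ ($O(\log n)$ bits); and (iii) for each cell of $S_i$, its server address together with its content at time $m_i$ (contributing $|S_i|\cdot(\log M+w)\le 2w|S_i|$ bits). The decoder knows $\M$'s code, its coin flips, and every $b_{j,\cdot}$ with $j\neq i$, so it can deterministically simulate $\M$ on the prefix of $Y_{n,k}$ ending just before $W_i$ (an execution that is independent of $b_{i,\cdot}$) and obtain the server state at time $b_i$. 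It then resumes simulation at time $m_i$: the internal state is supplied by the encoding, the contents of the cells of $S_i$ are supplied explicitly, and for any other cell touched during $[m_i,b_{i+1})$ its content matches the content at time $b_i$ coming from the prefix simulation, because by definition of $S_i$ no such cell was modified during $[b_i,m_i)$. Simulating forward through $R_i$ and reading off its outputs recovers, by perfect correctness, exactly the bits $b_{i,1},\dots,b_{i,\ell}$, so the encoding is injective.

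A standard counting argument then yields
\[
\Pr[|S_i|<n/(5k)]\;\le\;2^{\,mw+2w\lceil n/(5k)\rceil+O(\log n)-\ell w}.
\]
Plugging in $\ell\ge n/(2k)-1$ and using $\log M\le w$, the hypothesis $k\le n/(10(m+2\log n+11))$ is readily checked to make this exponent at most $-\log(nk)$ (the numerical constant $11$ absorbs the $O(\log n)$ overhead from the length marker and the rounding in $\ell$), so $\Pr[|S_i|<n/(5k)]\le 1/(nk)$. A union bound over the $k$ phases gives $\Pr\big[\exists\,i:\,|S_i|<n/(5k)\big]\le 1/n$, whence with probability at least $1-1/n$ the partition $0=b_0\le m_0\le\dots\le b_k=N$ is $(n/(5k))$-dense, as required.

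The main obstacle is to articulate precisely what information about the pre-$m_i$ execution the decoder really needs: cells outside $S_i$ that are probed during $[m_i,b_{i+1})$ could have been last written at any time before $b_i$, so one must argue that their contents at time $m_i$ are fully determined by the fixed randomness and by the $b_{j,\cdot}$ with $j<i$, hence known to the decoder. This is precisely why the argument first conditions on $\M$'s coins and on all $b_{j,\cdot}$ with $j\neq i$ before invoking the counting bound. Once the decoding is justified, tuning the arithmetic constants in the tail bound is routine.
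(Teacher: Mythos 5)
Your proof is correct and follows essentially the same information-transfer argument as the paper: identify the edges of $E(b_i,m_i,b_{i+1})$ with cells touched in $W_i$ and re-touched in $R_i$, then show that too few such cells would let one compress the $\ell w$ uniformly random bits $b_{i,\cdot}$ below their entropy via simulation (your encoding/decoding formulation is interchangeable with the paper's Alice--Bob protocol). The only cosmetic deviations are the explicit bijection with $S_i$, the per-phase failure bound of $1/(nk)$ rather than $1/n^2$, and writing $\log M$ for the server-address length (it should be $w$, though your $2w$ upper bound remains valid).
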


\begin{proof}
    By our assumption from the beginning of this section, $n \leq M$, and thus for any $k \in \{1, 2, \ldots, \frac{n}{2}\}$ all sequences $Y_{n, k}$ have all addresses in the correct range.
	Fix any $k$ satisfying the assumptions of this lemma and set $\ell = \left\lfloor \frac{n}{2k} \right\rfloor$.
	As defined before let $W_i$ and $R_i$ be the $i$-th block of writes and reads in $Y_{n,k}$, respectively.  Let $U_i$ be the vertices of $G_{n,k}$ corresponding to $W_i$,
	and $V_i$ be the vertices corresponding to $R_i$.
	It suffices to prove that for each $i\in \{1,\dots,k\}$, the probability that there are fewer than $n/5k$ edges between $U_i$ and $V_i$ is less than $1/n^2$.
	If this holds then by the union bound the lemma follows.
	
	For contradiction, assume there exists $i\in \{1,\dots,k\}$ such that the probability that there are fewer than $n/5k$ edges between $U_i$ and $V_i$ is at least $1/n^2$.
	Here, the randomness is taken over the choice of an input sequence $y \leftarrow Y_{n,k}$ and the internal randomness of $\M$. Fix such an $i$. Fix all the randomness except for the choice of 
	$b_{i,1},\dots,b_{i,\ell}$ in $Y_{n,k}$ so that $G_{n,k}$ obtained from this restricted distribution has fewer than $n/5k$ edges between $U_i$ and $V_i$ 
	with probability $\ge 1/n^2$ over the choice of $b_{i,1},\dots,b_{i,\ell}$. (This is possible by an averaging argument.)
	Let $B \subseteq \{0,1\}^{w \times \ell}$ be the set of choices for $b_{i,1},\dots,b_{i,\ell}$ which give fewer than $n/5k$ edges between $U_i$ and $V_i$
	in $G_{n,k}$. Clearly, $|B| \ge 2^{w\ell} / n^2$.
	
	We use $\M$ to construct a deterministic protocol that transmits any string from $B$ from Alice to Bob, two communicating parties, 
	using at most $\log |B| - 10$ bits. That gives a contradiction as such an efficient transmission violates the pigeon-hole principle.
	
	On input $b\in B$ to Alice, Alice sends a single message to Bob who can determine~$b$ from the message. They proceed as follows.
	Both Alice and Bob simulate $\M$ on $Y_{n,k}$ up until reaching $W_i$.
	All the randomness used before the $i$-th block of writes $W_i$ is fixed and known both to Alice and Bob.
	Then Alice continues with the simulation of $\M$ on $W_i$ with data $b_{i, 1}, b_{i, 2}, \ldots, b_{i, \ell}$ set to $b$.
	Once she finishes it, she sends
	the content of the internal memory of $\M$ to Bob using $wm$ bits. Then Alice continues with the simulation of $\M$ on $R_i$ and whenever $\M$ 
	makes a server probe to read from a location that was written last time during the simulation of $W_i$, Alice sends over the address and the content of that cell to Bob.
	Overall, Alice sends at most $mw + 2wn/5k$ bits of communication to Bob that can be concatenated into a single message of this size.
	
	On receiving side, Bob uses the internal state of $\M$ communicated by Alice to continue with the computation on $R_i$, while he uses the state of the server
	he obtained initially before reaching~$W_i$. He simulates all server probes by himself, except for read operations that match the list sent by Alice, where
	he initially uses the content provided by Alice. Clearly, Bob can determine~$b$ from the simulation.
	
	As $k \le \frac{n}{10(m+2 \log n + 11)}$,  $mw + 2wn/5k \le \left(n/2k - 2\log n -11\right)w$, so $mw + 2wn/5k \le (\ell - 2\log n - 10)w$, hence, the number of communicated bits is $mw + 2wn/5k \le \log |B| - (2w - 2)\log n - 10w$, which is a contradiction.
\end{proof}

\begin{remark}
Using good error-correcting codes (see for instance \cite{MacWS77}), this lemma could be generalized to the case when $\M$ implements Array Maintenance problem with probability $1 - p_{\mathrm{fail}} < 1$, i.e., $\M$ is allowed to return a wrong value 
for each of its input read operations with a small constant  probability $p_{\mathrm{fail}}$.
The graph $G_{n,k}$ would still have $(\epsilon n/k)$-dense $k$-partition with $1-1/n$ probability for some $\epsilon>0$ which depends only on the allowed failure probability $p_{\mathrm{fail}}$.
\label{rem:error_correcting_codes}
\end{remark}

\begin{remark}
Note that the randomness of input sequence $Y_{n,k}$ is used only for the data to be written.
Moreover, the proof relies only on incompressibility of a random string stored during the write block and it does not rely on the addresses used to store this data.
Thus, the same proof goes through even for \emph{semi-offline ORAMs}, i.e., if we allow the ORAM to know the type and address of each input operation in $y$ in advance.
On the other hand, as our proof uses interleaved sequences of write blocks and read blocks, it is unlikely that it would be possible to extend it to the \emph{read-only online ORAM} model of Weiss and Wichs~\cite{WeissW18}.
\label{rem:semioffline}
\end{remark}

Note that using an averaging argument we can assume that the probability in Lemma~\ref{lem:edges_inside_blocks} is only over the randomness of $\M$.
Thus we get the following corollary proving for every $k$ the existence of a single input sequence whose corresponding access graph has $\frac{n}{5k}$-dense $k$-partition with high probability.

\begin{corollary}
    For any even integer $n \geq 10$ and an integer $k\ge 1$ such that $k \le \frac{n}{10(m+2 \log n + 11)}$ there is an input sequence $y_{n,k}$ of length $n$ such that $G(A(\M, y_{n,k}))$ has a $(n/5k)$-dense $k$-partition with probability at least $1-\frac{1}{n}$.
    \label{cor:hard_op_sequence}
\end{corollary}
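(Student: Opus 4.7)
The plan is to derive the corollary directly from Lemma~\ref{lem:edges_inside_blocks} by a standard averaging (probabilistic) argument. Lemma~\ref{lem:edges_inside_blocks} states that $G_{n,k}$ admits an $(n/5k)$-dense $k$-partition with probability at least $1 - 1/n$, where the probability is taken jointly over the random draw $y \leftarrow Y_{n,k}$ and over the internal randomness of $\M$. The corollary only asks for a probability statement over the randomness of $\M$, for some fixed $y_{n,k}$, so the task is just to move from the joint distribution to a suitable conditional.

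First I would formally set up the event. Let $E$ denote the event that $G(A(\M, y))$ has an $(n/5k)$-dense $k$-partition, viewed as a function of both the realization $y$ of $Y_{n,k}$ and the internal coin tosses of $\M$. By Lemma~\ref{lem:edges_inside_blocks}, $\Pr_{y, \M}[E] \geq 1 - 1/n$. Writing this as an expectation,
\[
\mathbb{E}_{y \leftarrow Y_{n,k}}\bigl[\Pr_{\M}[E \mid y]\bigr] \;\geq\; 1 - \tfrac{1}{n}.
\]
By an averaging argument, there exists at least one input sequence $y_{n,k}$ in the support of $Y_{n,k}$ such that $\Pr_{\M}[E \mid y_{n,k}] \geq 1 - 1/n$. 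Fix any such $y_{n,k}$; this is exactly the sequence promised by the corollary, and $G(A(\M, y_{n,k}))$ has an $(n/5k)$-dense $k$-partition with probability at least $1 - 1/n$ over the internal randomness of $\M$.

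There is essentially no obstacle here: the range condition $n \leq M \leq 2^w$ ensures $y_{n,k}$ is a valid input sequence of length $n$ (each fixed realization of $Y_{n,k}$ uses only addresses in $[\ell] \subseteq [M]$ and data in $\{0,1\}^w$), and the hypothesis $k \leq \frac{n}{10(m + 2\log n + 11)}$ is precisely what is needed to invoke Lemma~\ref{lem:edges_inside_blocks}. The only mildly subtle point worth noting is that the averaging must be carried out over $y$ while keeping the ORAM's randomness as the remaining probability space, which is exactly the direction required by the statement.
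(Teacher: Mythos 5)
Your proof is correct and matches the paper's approach exactly: the paper also derives the corollary from Lemma~\ref{lem:edges_inside_blocks} via an averaging argument over the choice of $y \leftarrow Y_{n,k}$, fixing a single realization $y_{n,k}$ that achieves at least the average success probability over the internal randomness of $\M$.
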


We show that by statistical security of $\M$, this property holds for a single input sequence and many different values of~$k$.

\begin{lemma}
	Let $n, m, M, w, \M$ be as in the beginning of this section, and assume $n$ is even and $n\ge 10$. Let $y$ be an input sequence to $\M$ of length $n$.
	If $\M$ is a statistically secure online ORAM then for every $k \in \left\{ 1, 2, \dots, \left\lfloor \frac{n}{10(m+2 \log n + 11)} \right\rfloor \right\}$
	$$\Pr\left[G(A(\M,y)) \text{ has an $(n/5k)$-dense $k$-partition}\right] \geq \frac{3}{5}.$$
	\label{lemma:graph_properties_statistical_sec}
\end{lemma}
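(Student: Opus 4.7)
The plan is to transfer the conclusion of \Cref{cor:hard_op_sequence} from the specific hard sequence $y_{n,k}$ to an arbitrary input sequence $y$ of the same length, paying a small loss in probability by invoking statistical security.

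First I would observe that the predicate \textquotedblleft the access graph has an $(n/5k)$-dense $k$-partition\textquotedblright\ is a deterministic function of the access sequence alone: the access graph $G(A(\M,y))$ is built directly from $A(\M,y)$ by \Cref{def:probe_graph}, and whether an ordered graph admits an $\ell$-dense $k$-partition depends only on its edge set (and is even efficiently testable, as noted after \Cref{def:dense_block_partition}). Call this $\{0,1\}$-valued function $f_k$, so that $f_k(A(\M,z)) = 1$ iff $G(A(\M,z))$ has an $(n/5k)$-dense $k$-partition.

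Next I would fix $k$ in the stated range and let $y_{n,k}$ be the input sequence guaranteed by \Cref{cor:hard_op_sequence}, so that $\Pr[f_k(A(\M,y_{n,k})) = 1] \ge 1 - 1/n$, where the probability is over the internal randomness of $\M$. Since $|y| = |y_{n,k}| = n$, the statistical security condition in \Cref{def:oram} gives $\SD{A(\M,y)}{A(\M,y_{n,k})} \le 1/4$. Applying the data-processing-type inequality from \Cref{prop:stat_dif} with the function $f_k$, I obtain
\[
\bigl|\Pr[f_k(A(\M,y)) = 1] - \Pr[f_k(A(\M,y_{n,k})) = 1]\bigr| \le \SD{A(\M,y)}{A(\M,y_{n,k})} \le \tfrac{1}{4}.
\]

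Combining the two bounds yields $\Pr[f_k(A(\M,y)) = 1] \ge 1 - 1/n - 1/4$, which for $n \ge 10$ is at least $1 - 1/10 - 1/4 = 13/20 \ge 3/5$, as required. There is no real obstacle here; the only subtlety is simply to notice that the dense-partition property factors through the access sequence so that \Cref{prop:stat_dif} applies cleanly, and that the hard sequence $y_{n,k}$ from \Cref{cor:hard_op_sequence} is only needed as a \emph{reference} distribution against which every other length-$n$ input can be compared via statistical security.
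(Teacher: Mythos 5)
Your proposal is correct and matches the paper's argument almost verbatim: both invoke \Cref{cor:hard_op_sequence} to get a reference sequence $y_{n,k}$, use the statistical security bound $\SD{A(\M,y)}{A(\M,y_{n,k})}\le 1/4$, and apply \Cref{prop:stat_dif} to the indicator of the dense-partition property. The only (slightly favorable) difference is that you apply \Cref{prop:stat_dif} directly to the access-sequence distributions with $f_k$ absorbing the graph construction, whereas the paper applies it to the induced graph distributions and leaves the data-processing step for the map $A\mapsto G(A)$ implicit.
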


\begin{proof}
	For contradiction, suppose that for some~$k$ the probability is less than $3/5$.
	From the statistical security of $\M$ we know that the statistical distance $\SD{A(\M,y)}{A(\M, y_{n,k})} \leq \frac{1}{4}$ where $y_{n, k}$ is given by Corollary~\ref{cor:hard_op_sequence}.
	By Corollary~\ref{cor:hard_op_sequence} the sequence $y_{n, k}$ gives us a graph $G(A(\M, y_{n,k}))$ which has an $(n/5k)$-dense $k$-partition with probability at least $1-1/n \ge 9/10$.
	Define a function $f_{\ell,k}$ on ordered graphs that is an indicator of having an $\ell$-dense $k$-partition.
	Applying Proposition~\ref{prop:stat_dif} with $X \leftarrow G(A(\M,y))$, $Y \leftarrow G(A(\M,y_{n,k}))$, and $f=f_{n/5k,k}$, we can
	conclude that $G(A(\M,y))$ has an $(n/5k)$-dense $k$-partition with probability at least $3/4-1/10 \ge 3/5$.
\end{proof}

We are ready to prove our main theorem for statistically secure ORAM.

\begin{theorem}
    There are constants $c_0, c_1 > 0$ such that for any integers $m, w\ge 1$ and $M \ge n \geq c_0$  where $m \leq \sqrt{n}$ and $M \leq 2^{w}$, any statistically secure online ORAM $\M$ with     address range $M$, cell size $w$ bits and $m$ cells of internal memory must perform at least $c_1 n \log n$ server probes in expectation (the expectation is over the randomness of $\M$) on any input sequence of length $n$.
	\label{thm:main}
\end{theorem}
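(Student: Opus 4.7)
The plan is to combine the graph-theoretic machinery of Section~\ref{sec:graph_part} with the ``many dense partitions'' property established in Lemma~\ref{lemma:graph_properties_statistical_sec}, and then translate a lower bound on the number of edges of the access graph into a lower bound on the number of server probes via the simple observation that every vertex of an access graph has indegree at most one.

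First, I would fix an arbitrary input sequence $y$ of length $n$. Applying Lemma~\ref{lemma:graph_properties_statistical_sec} gives, for each integer $k$ in the range $\{1, \ldots, \lfloor n/(10(m+2\log n+11))\rfloor\}$, that the random access graph $G(A(\M,y))$ has an $(n/5k)$-dense $k$-partition with probability at least $3/5$. This is exactly the hypothesis of Corollary~\ref{cor:main_graph_cor} with parameters $\ell = n/5$, $p = 3/5$, $s = 1$, and $t = \lfloor n/(10(m+2\log n+11))\rfloor$. Applying this corollary yields that the expected number of edges of $G(A(\M,y))$ is at least
$$\tfrac{3}{5}\cdot\tfrac{n}{10}\cdot\bigl(\lfloor \log_4 t\rfloor - \lceil\log_4 1\rceil\bigr) = \Omega(n \log t).$$
The assumption $m \leq \sqrt{n}$ then kicks in: since $m + 2\log n + 11 = O(\sqrt{n})$, we have $t = \Omega(\sqrt{n}/(\log n))$, and therefore $\log_4 t = \Omega(\log n)$, giving an expected edge count of $\Omega(n \log n)$. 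By Definition~\ref{def:probe_graph}, every vertex of the access graph has indegree at most one, so the number of edges is at most the number of vertices. Consequently, the expected number of probes performed by $\M$ on $y$ is at least the expected number of edges of $G(A(\M,y))$, which is $\Omega(n \log n)$. The constant $c_0$ is chosen large enough that $t \geq 4$ (so $\lfloor \log_4 t \rfloor \geq 1$) and that the hypothesis $n \geq 10$ of Lemma~\ref{lemma:graph_properties_statistical_sec} is met.

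The main obstacle I anticipate is purely bookkeeping: verifying that the constants in the parameter range of Lemma~\ref{lemma:graph_properties_statistical_sec} align cleanly with the hypotheses of Corollary~\ref{cor:main_graph_cor} (in particular, that $s = 1$ is a legitimate choice and that rounding in $\lfloor \log_4 t \rfloor$ does not erode the logarithmic factor when $t$ is small). All of the conceptually nontrivial work, namely constructing the hard distributions $Y_{n,k}$, applying the information-transfer compression argument to obtain dense partitions for $Y_{n,k}$, and lifting this property to an \emph{arbitrary} input sequence $y$ via statistical security, has already been carried out in the preceding lemmas. The theorem is therefore essentially an assembly step.
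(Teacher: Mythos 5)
Your proposal is correct and follows essentially the same route as the paper's own proof: fix $y$, invoke Lemma~\ref{lemma:graph_properties_statistical_sec} to obtain the dense-partition property for all $k$ up to $\lfloor n/(10(m+2\log n+11))\rfloor$, feed this into Corollary~\ref{cor:main_graph_cor} with $s=1$, $\ell=\lfloor n/5\rfloor$, $p=3/5$, and pass from edges to vertices via the indegree-one observation. Your intermediate estimate $t=\Omega(\sqrt{n}/\log n)$ is a bit looser than needed (since $m\le\sqrt n$ in fact gives $t=\Omega(\sqrt n)$, which is what the paper uses to get $t\ge\sqrt n/40$ for $n\ge 1000$), but it still yields $\log_4 t=\Omega(\log n)$, so the conclusion goes through unchanged.
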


\begin{proof}
    Fix an ORAM machine $\M$. Consider any input sequence $y$ to $\M$ of length $n$.
    By Lemma~\ref{lemma:graph_properties_statistical_sec} for every $k$, such that
	$1 \leq k \leq \left\lfloor \frac{n}{10(m+2 \log n + 11)} \right\rfloor$,
	we get that
	$$\Pr\left[G(A(\M,y)) \text{ has an $(n/5k)$-dense $k$-partition}\right] \geq \frac{3}{5}.$$

	Applying Corollary~\ref{cor:main_graph_cor} with $s = 1$, $t = \left\lfloor \frac{n}{10(m+2 \log n + 11)} \right\rfloor$, $\ell = \left\lfloor\frac{n}{5}\right\rfloor$, and $p = 3/5$,
	we can lower bound the expected number of edges in $G(A(\M,y))$ by
	$$\frac{3n}{50}  \left\lfloor \log_4 \left\lfloor \frac{n}{10(m + 2 \log n + 11)} \right\rfloor  \right\rfloor .$$
	 For $n\ge 1000$, $\left\lfloor \frac{n}{10(m+2 \log n + 11)} \right\rfloor \ge \frac{\sqrt{n}}{40}$. 
	 Hence, the expected number of edges in $G(A(\M,y))$ is at least $\frac{3}{100} \cdot  n \log \frac{\sqrt{n}}{40} \ge \frac{1}{100} \cdot n \log n$, provided $c_0$ is large enough.
	 Since the indegree of each vertex of an access graph is at most one, the expected number of vertices in $G(A(\M,y))$, which is the same as the expected number of probes in $A(\M,y)$,  is at least $\frac{1}{100} \cdot n \log n$.
\end{proof}

Next, we prove $\Omega(\log n)$ lower bound for ORAMs satisfying strong computational security from Definition~\ref{def:comp_sec_strong}.

\begin{lemma}

Let $m, M, w \colon \mathbb{N} \rightarrow \mathbb{N}$ be non-decreasing functions such that for all $n$ large enough: $m(n) \leq \sqrt{n}$ and $n \leq M(n) \leq 2^{w(n)}$.
    Let $\{ \M_n \}_{n\in \N}$ be a sequence of online ORAMs with address range $M(n)$, cell size $w(n)$ bits and $m(n)$ cells of internal memory which satisfy strong computational security. Let $\{y_n\}_{n\in \N}$ be an infinite family of input sequences where $|y_n|=n$, for each $n\in \N$.

	Then there exists $n_0$ such that for every $n \geq n_0$ and for every $k \in \left\{ 1, 2, \dots, \left\lfloor \frac{n}{10(m(n)+2 \log n + 11)} \right\rfloor \right\}$
	$$\Pr\left[G(A(\M_{n},y_n)) \text{ has an $(n/5k)$-dense $k$-partition}\right] \geq \frac{3}{5}.$$
	\label{lemma:graph_properties_strong_comp_sec}
\end{lemma}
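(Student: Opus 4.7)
The plan is to mirror the proof of Lemma~\ref{lemma:graph_properties_statistical_sec}, but replace the statistical-distance step with a polynomial-time distinguisher that exploits the efficient testability of $\ell$-dense $k$-partitions. Assume toward contradiction that the conclusion fails, i.e., for every $n_0$ there is some even $n \geq n_0$ (at least $10$) and some $k(n)$ in the admissible range such that
\[
\Pr\bigl[G(A(\M_n,y_n)) \text{ has an } (n/5k(n))\text{-dense } k(n)\text{-partition}\bigr] < \tfrac{3}{5}.
\]
For each such bad $n$, apply Corollary~\ref{cor:hard_op_sequence} to obtain a specific input sequence $y'_n := y_{n,k(n)}$ of length $n$ whose access graph has the $(n/5k(n))$-dense $k(n)$-partition with probability at least $1 - 1/n$. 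For all other $n$, let $y'_n$ be any fixed length-$n$ sequence. The families $\{y_n\}_{n\in\N}$ and $\{y'_n\}_{n\in\N}$ are then valid inputs to which strong computational security from Definition~\ref{def:comp_sec_strong} applies.

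Next, I construct a polynomial-time distinguisher $D$ which on input $(y, y', \alpha, 1^n)$ proceeds as follows. It parses $y'$ and tests whether it matches the deterministic syntactic template of $y_{n,k}$ from Corollary~\ref{cor:hard_op_sequence}, i.e., $k$ consecutive blocks each consisting of $\ell = \lfloor n/(2k)\rfloor$ writes at addresses $1,\ldots,\ell$ followed by $\ell$ reads, plus the alternating padding, for some integer $k \leq n/(10(m(n)+2\log n+11))$. This template is parseable in time polynomial in $n$, and the parameter $k$ is uniquely recoverable from the block lengths; if it does not match (or if both inputs match, break ties by always preferring $y'$), $D$ outputs $0$. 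Otherwise, $D$ builds the access graph $G(\alpha)$, runs the greedy polynomial-time algorithm mentioned after Definition~\ref{def:dense_block_partition} to test for an $(n/5k)$-dense $k$-partition, and outputs $1$ iff the test succeeds. Since $|\alpha|$ and $|y|, |y'|$ are polynomial in $n$, the distinguisher is polynomial time.

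For the infinitely many bad $n$, the sequence $y'_n$ matches the template with the recoverable value $k = k(n)$. Thus when $\alpha = A(\M_n, y'_n)$, Corollary~\ref{cor:hard_op_sequence} gives $\Pr[D \to 1] \geq 1 - 1/n \geq 9/10$ for $n$ large, while by our assumption $\Pr[D(y_n,y'_n,A(\M_n,y_n),1^n) = 1] < 3/5$. The advantage of $D$ is therefore at least $9/10 - 3/5 = 3/10$ for infinitely many $n$, which is non-negligible and contradicts strong computational security. The main obstacle is really just packaging: making sure the distinguisher reliably identifies which of $y_n, y'_n$ is the template sequence (handled by the canonical tie-breaker) and that the parameter $k$ is recoverable from the template alone, so that the test threshold $n/5k$ used by $D$ agrees with the one guaranteed by Corollary~\ref{cor:hard_op_sequence}. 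The reliance on a polynomial-time test for dense partitions is exactly why this argument works for strong but not weak computational security, where $D$ cannot see $y'_n$ and therefore cannot learn which $k$ to use.
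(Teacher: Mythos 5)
Your proposal is correct and follows essentially the same route as the paper: assume some $(n,k)$ pair violates the bound, take $y'_n = y_{n,k}$ from Corollary~\ref{cor:hard_op_sequence}, recover $k$ from the block structure of $y'$, and run the polynomial-time greedy test for an $(n/5k)$-dense $k$-partition on the access graph to distinguish. Your version is even marginally cleaner than the paper's, which inserts an unnecessary coin flip when the test succeeds and thereby obtains a smaller (but still non-negligible) advantage of $3/20$ versus your direct $3/10$; also note the template-matching and tie-breaking step is harmless but superfluous, since strong computational security hands the distinguisher the ordered pair $(y_n, y'_n)$, so it always knows $y'$ is the candidate template.
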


\begin{proof}
	For contradiction, assume there are infinitely many pairs of integers $(n,k)$, s.t. $k \leq \left\lfloor \frac{n}{10(m(n)+2 \log n + 11)} \right\rfloor$ and that the probability that $y_n$ has an $(n/5k)$-dense $k$-partition is less than $3/5$.
	
	Let $\mathcal{D}$ be an algorithm which given two input sequences $y$ and $y'$ of length $n$ and an access sequence $A(\M_{n},z)$, where $z \in \{y,y'\}$, does the following:
	\begin{enumerate}
	    \item Compute $n$.
	    \label{D:step:one}
	    \item Compute $k'$ to be the number of blocks of consecutive reads of length $\lfloor n/k' \rfloor$ in the input sequence $y'$.
	    \label{D:step:two}
	    \item If $A(\M_{n},z)$ does not have $(n/5k')$-dense $k'$-partition $\mathcal{D}$ returns~\enquote{1} (i.e. $D$ guesses that $z = y$).
	    \label{D:step:three}
	    \item Otherwise $\mathcal{D}$ returns \enquote{1} with probability $1/2$ and \enquote{2} with probability $1/2$ (i.e. $D$ guesses at random).
	    \label{D:step:four}
	\end{enumerate}
	
	There is a polynomial time greedy algorithm determining whether the graph $G(A(\M_{n}, z))$ contains an $\ell$-dense $k$-partition.
	Thus algorithm $\mathcal{D}$ runs in time polynomial in the length of the access sequence $A(\M_{n}, z)$.
	
	Let $y_{n,k}$ be a sequence from Corollary~\ref{cor:hard_op_sequence}. So, $G(A(\M_{n}, y_{n,k}))$ has an $(n/5k)$-dense $k$-partition with probability at least $1-1/n \ge 9/10$.
	Observe that if $y=y_n$ and $y' = y_{n,k}$  then:
	$$
	\left|
	\Pr[\mathcal{D}(y_n, y_{n,k}, A(\M_{n}, y_n)) = 1]
	- \Pr[\mathcal{D}(y_n, y_{n,k}, A(\M_{n}, y_{n,k})) = 1] 
	\right|
	 \geq
	\left(\frac{2}{5} + \frac{3}{5}\cdot \frac{1}{2}\right)
	- \left(\frac{1}{10} + \frac{9}{10} \cdot \frac{1}{2}\right)
	= \frac{3}{20}.
	$$
	By the assumption $\mathcal{D}$ returns \enquote{1} in step~\ref{D:step:three} on $A(\M_{n}, y_n)$ with probability at least $2/5$.
	By Corollary~\ref{cor:hard_op_sequence} $\mathcal{D}$ answers \enquote{1} on $A(\M_{n}, y_{n, k})$ with probability at most $1/10$.
	
	This contradicts the strong computational security of $\M_{n}$ as $\mathcal{D}$ should not distinguish between $y$ and $y'$ with non-negligible probability.
\end{proof}

\begin{theorem}
    
    Let $m, M, w \colon \mathbb{N} \rightarrow \mathbb{N}$ be non-decreasing functions such that for all $n$ large enough: $m(n) \leq \sqrt{n}$ and $n \leq M(n) \leq 2^{w(n)}$.
    Let $\{ \M_{n} \}_{n\in \N}$ be a sequence of online ORAMs with address range $M(n)$, cell size $w(n)$ bits and $m(n)$ cells of internal memory which satisfy strong computational security. Let $\{y_n\}_{n\in \N}$ be an infinite family of input sequences where $|y_n|=n$, for each $n\in \N$.
    
	There are constants $c_0, c_1 > 0$, such that for any $n \geq c_0$, $\M_{n}$ must perform in expectation at least $c_1 n \log n$ server probes on the input sequence $y_n$.
	\label{thm:main_strong_computational}
\end{theorem}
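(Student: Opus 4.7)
The plan is to follow exactly the same outline as the proof of Theorem~\ref{thm:main} (the statistical version), with the only difference being that Lemma~\ref{lemma:graph_properties_strong_comp_sec} plays the role that Lemma~\ref{lemma:graph_properties_statistical_sec} played there. Since Lemma~\ref{lemma:graph_properties_strong_comp_sec} already delivers the same ``$3/5$ probability of having an $(n/5k)$-dense $k$-partition'' conclusion (just for $n \ge n_0$ rather than all even $n\ge 10$), all the work is done and the theorem is essentially a direct corollary.

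Concretely, I would fix the family $\{\M_n\}$ and $\{y_n\}$ and, for any $n \ge \max(n_0, 1000)$ satisfying the hypotheses, invoke Lemma~\ref{lemma:graph_properties_strong_comp_sec} to conclude that for every integer $k$ with $1 \le k \le \bigl\lfloor \frac{n}{10(m(n)+2\log n +11)} \bigr\rfloor$, the access graph $G(A(\M_n,y_n))$ has an $(n/5k)$-dense $k$-partition with probability at least $3/5$. Then I would apply Corollary~\ref{cor:main_graph_cor} with parameters $s=1$, $t = \bigl\lfloor \frac{n}{10(m(n)+2\log n + 11)} \bigr\rfloor$, $\ell = \lfloor n/5\rfloor$, and $p = 3/5$, yielding an expected edge count of at least $\frac{3n}{50}\bigl\lfloor \log_4 t \bigr\rfloor$.

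For $n$ large enough, the assumption $m(n) \le \sqrt{n}$ gives $t \ge \sqrt{n}/40$, so $\log_4 t = \Omega(\log n)$ and the expected number of edges is at least $c_1 n \log n$ for a suitable absolute constant $c_1 > 0$ (taking $c_0$ large enough to absorb the floor losses and the $\max(n_0,1000)$ threshold). Finally, since every vertex of an access graph has indegree at most one, the number of vertices (equivalently, the length of $A(\M_n, y_n)$, which is the number of server probes) is at least the number of edges; taking expectations preserves this inequality, and the theorem follows.

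I do not expect any real obstacle: the only point requiring a line of care is combining the two ``$n$ is large enough'' thresholds (the $n_0$ coming from Lemma~\ref{lemma:graph_properties_strong_comp_sec}, and the $n \ge 1000$ used in Theorem~\ref{thm:main} to guarantee $t \ge \sqrt{n}/40$) into a single constant $c_0$, and noting that restricting attention to even $n$ is unnecessary here because Lemma~\ref{lemma:graph_properties_strong_comp_sec} is already stated for all sufficiently large $n$. The heavy lifting (the indistinguishability argument producing dense $k$-partitions for the \emph{same} input sequence $y_n$ across many values of $k$) has already been absorbed into Lemma~\ref{lemma:graph_properties_strong_comp_sec}, so the present theorem is a clean application of the graph-theoretic machinery of Section~\ref{sec:graph_part}.
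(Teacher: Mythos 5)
Your proposal is correct and is essentially identical to the paper's own proof, which simply states that the proof of Theorem~\ref{thm:main} carries over verbatim with Lemma~\ref{lemma:graph_properties_strong_comp_sec} substituted for Lemma~\ref{lemma:graph_properties_statistical_sec}, and notes (as you do) that the change in quantifier order is inherited from the two lemmas. You have, if anything, spelled out more of the bookkeeping (the threshold $\max(n_0,1000)$, the derivation $t\ge\sqrt n/40$, the indegree-at-most-one argument) than the paper does.
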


\begin{proof}
    The proof is identical to the proof of Theorem~\ref{thm:main} but we use Lemma~\ref{lemma:graph_properties_strong_comp_sec} instead of Lemma~\ref{lemma:graph_properties_statistical_sec}.
    Note that the different order of quantifiers is caused by different order of quantifiers in Lemma~\ref{lemma:graph_properties_statistical_sec} and in Lemma~\ref{lemma:graph_properties_strong_comp_sec}.
\end{proof}

In the rest of this section, we prove an $\omega(1)$ lower bound for ORAMs satisfying weak computational security from Definition~\ref{def:comp_sec_weak}.
Note that in the case of weak computational security it is unclear which $k$ should the adversary use to distinguish $y$ and $y'$.
Thus, we cannot directly conclude that $y$ has $\frac{n}{5k}$-dense $k$-partition for every $n$ and $k \leq \left\lfloor \frac{n}{10(m(n)+2 \log n + 11)} \right\rfloor$.
On the other hand, for every $k$ there could be only finitely many values $n$ such that there is an input sequence of length $n$ which has no $\frac{n}{5k}$-dense $k$-partition.
This fact allows us to prove the $\omega(1)$ lower bound for weak computational security.

\begin{theorem}
    Let $m, M, w \colon \mathbb{N} \rightarrow \mathbb{N}$ be non-decreasing functions such that for all $n$ large enough: $m(n) \leq \sqrt{n}$ and $n \leq M(n) \leq 2^{w(n)}$.
    Let $\{ \M_{n} \}_{n\in \N}$ be a sequence of online ORAMs with address range $M(n)$, cell size $w(n)$ bits and $m(n)$ cells of internal memory which satisfy weak computational security. Let $\{y_n\}_{n\in \N}$ be a sequence of input sequences where $|y_n|=n$, for each $n\in \N$.
    
	For any constant $c_1 > 0$ there is a constant $c_0 > 0$, such that for any $n \geq c_0$, $\M_{n}$ must perform in expectation at least $c_1 n$ server probes on the input sequence $y_n$.
	
	\label{thm:mainomega}
\end{theorem}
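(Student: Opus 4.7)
Fix a constant $c_1 > 0$. The plan is to isolate a constant-sized set $K$ of powers of~$4$, say $K = \{4^0, 4^1, \ldots, 4^J\}$ with $J = \lceil 50c_1/3 \rceil$, and to show that the access graph $G(A(\M_n, y_n))$ must have an $(n/5k)$-dense $k$-partition with probability at least $3/5$ for every $k \in K$, provided $n$ is large enough. Once this is achieved, combining with Lemma~\ref{lemma:partitioning_implies_enough_edges} will immediately yield $\Omega(c_1 n)$ expected edges in the access graph, hence at least that many probes.

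For each fixed $k \in K$, I will proceed by contradiction using weak computational security. I will form an auxiliary family $\{y'_n\}_{n\in\N}$ by setting $y'_n = y_{n,k}$ (from Corollary~\ref{cor:hard_op_sequence}) whenever $n$ is large enough for the corollary to apply -- which happens for all sufficiently large $n$ since $k$ is a constant and $m(n) \leq \sqrt{n}$ -- and any fixed sequence of length $n$ otherwise. The distinguisher $D_k$ will, on input an access sequence together with $1^n$, run the polynomial-time greedy algorithm described in Section~\ref{sec:graph_part} to test whether the corresponding access graph has an $(n/5k)$-dense $k$-partition and output~$1$ exactly when it does not. Since $y_{n,k}$ fails the partition property with probability at most $1/n$, the distinguisher will answer~$1$ on $A(\M_n, y_{n,k})$ with probability at most $1/n$. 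If the partition property for $y_n$ failed (so $D_k$ would output $1$ with probability exceeding $2/5$) for infinitely many $n$, then $D_k$ would achieve non-negligible distinguishing advantage at least $2/5 - 1/n \geq 3/10$, contradicting weak computational security. Consequently, for each $k \in K$ there is a threshold $n_k$ such that the partition property holds with probability at least $3/5$ for all $n \geq n_k$.

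Setting $c_0 = \max_{k \in K} n_k$ -- a finite maximum since $|K| = J + 1$ is a constant depending only on $c_1$ -- for every $n \geq c_0$ all $|K|$ partition properties hold simultaneously with probability at least $3/5$ each. Letting $K'$ denote the random subset of $k \in K$ for which $G(A(\M_n, y_n))$ actually carries an $(n/5k)$-dense $k$-partition, linearity of expectation gives $\mathbb{E}[|K'|] \geq 3(J+1)/5$, and Lemma~\ref{lemma:partitioning_implies_enough_edges} applied to any realisation of $K'$ (as a subset of powers of~$4$) guarantees at least $\frac{n/5}{2} \cdot |K'| = \frac{n}{10}|K'|$ edges in the access graph. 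Taking expectations and using that each vertex of an access graph has indegree at most one, the expected number of probes will be at least $\frac{n}{10} \cdot \frac{3(J+1)}{5} \geq c_1 n$, as required.

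The main obstacle -- and the reason the conclusion degrades to $\omega(1)$ rather than $\Omega(\log n)$ as in Theorem~\ref{thm:main_strong_computational} -- is the order of quantifiers forced by weak computational security. Because the distinguisher receives no access to the input sequences, it cannot adaptively choose $k$ depending on what it sees; I must fix each $k$ in advance and set up a separate distinguisher-based contradiction for each of only finitely many constant~$k$'s, which is why the argument yields constant-many (rather than $\Theta(\log n)$-many) dense partitions and a linear-in-$c_1$ rather than logarithmic lower bound.
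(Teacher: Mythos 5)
Your proof is correct and follows essentially the same route as the paper: the paper defines $k(n)$ as the smallest $k$ for which the $(n/5k)$-dense $k$-partition property fails with probability $\geq 1/2$, shows $k(n)\to\infty$ via the same $k$-hardwired greedy distinguisher you describe, and then invokes Corollary~\ref{cor:main_graph_cor} to lower bound the edges by $\Omega(n\log k(n))$. You instead fix a constant-sized set $K$ of powers of~$4$ up front, use the identical distinguisher argument to show each $k\in K$ passes for all sufficiently large $n$, and apply Lemma~\ref{lemma:partitioning_implies_enough_edges} directly; this is just a different bookkeeping of the same idea, and correctly emphasizes that the order-of-quantifiers constraint imposed by weak security is precisely what limits the argument to constantly many values of~$k$.
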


	In particular there is no computationally secure online ORAM with constant bandwidth overhead $\mathcal{O}(1)$.

\begin{proof}
    For each $n\in\N$, define $k(n)$ to be the smallest $k$ such that $$\Pr[G(A(\M_{n},y_n)) \text{ has  $(n/5k)$-dense $k$-partition}] < 1/2.$$
    Using Corollary~\ref{cor:main_graph_cor} we get for each $n$ large enough that the expected number of edges in $G(A(\M_{n},y_n))$ is at least $c \cdot n \log k(n)$, for some absolute constant $c>0$.
    It suffices to show that $k(n) \rightarrow \infty$ as $n \rightarrow \infty$.
	There cannot exist a constant $k$ such that $Y_n$ has $(n/5k)$-dense $k$-partition with probability less than $\frac{1}{2}$ for infinitely many $n$.
	Otherwise $\left\{ y_n \right\}_n$ would be computationally distinguishable from $\left\{ Y_{n, k} \right\}_n$ (by the greedy algorithm which has $k$ hard-wired). So,  $k(n) \rightarrow \infty$ as $n \rightarrow \infty$.
\end{proof}

\section{Alternative Definitions for Oblivious RAM}\label{sec:AlternativeDefinitions}
In this section, we recall some alternative definitions for ORAM which appeared in the literature and explain the relation of our lower bound to those models.

\paragraph{The definition of Larsen and Nielsen.}
Larsen and Nielsen (see Definition 4 in \cite{LarsenN18}) required that for any two input sequences of equal length, the corresponding distributions of access sequences cannot be distinguished with probability greater than $ \sfrac{1}{4} $ by any algorithm running in polynomial time in the sum of the following terms: the length of the input sequence, logarithm of the number of memory cells (i.e., $ \log n $), and the size of a memory cell (i.e., $ \log n $ for the most natural parameters).
We show that their definition implies statistical closeness as considered in our work (see~the statistical security property in Definition~\ref{def:oram}).
Therefore, any lower bound on the bandwidth overhead of ORAM satisfying our definition implies a matching lower bound w.r.t. the definition of Larsen and Nielsen~\cite{LarsenN18}.

To this end, let us show that if two distributions of access sequences are not statistically close, then they are distinguishable in the sense of Larsen and Nielsen.
Assume there exist two input sequences $ y$ and  $y' $ of equal lengths, for which the access sequences $ A(\M, y) $ and $ A(\M, y') $ have statistical distance greater than $ \sfrac{1}{4} $.
We define a distinguisher algorithm $ D $ that on access sequence $ x $ outputs $ 1 $ whenever $ \Pr[A(\M, y)=x]> \Pr[A(\M, y')=x]$, outputs $ 0 $ whenever $ \Pr[A(\M, y)=x]< \Pr[A(\M, y')=x]$, and outputs a uniformly random bit whenever $ \Pr[A(\M, y)=x]= \Pr[A(\M, y')=x]$.
It follows from definition of $D$, basic properties of statistical distance (see~\cref{prop:StatDistanceAsArea}), and our assumption about the statistical distance of $ A(\M, y) $ and $ A(\M, y') $ that
\[
|\Pr[D(A(\M, y))=1]-\Pr[D(A(\M, y'))=1]| = \SD{A(\M, y)}{A(\M, y')}>\frac{1}{4}\ .
\]
Note that $ D $ can be specific for the pair of the two input sequences $y$ and $y'$ and it can have all the significant information about the distributions $ A(\M, y) $ and $ A(\M, y') $ hardwired.
For example, it is sufficient to store a string describing for each access sequence $x$ whether it is more, less, or equally likely under $A(\M, y)$ or $A(\M, y')$.
Even though such string is of exponential size w.r.t. the length of the access pattern, $D$ needs to simply access the position corresponding to the observed access pattern to output its decision as described above.
Thus, $ D $ can run in linear time in the length of the access sequence (which is polynomial in the length of the input sequence) and distinguishes the two access sequences with probability greater than $ \sfrac{1}{4} $.

\paragraph{The definition of Goldreich and Ostrovsky.} 
Unlike the original definition of ORAM from Goldreich~\cite{Goldreich87} and Ostrovsky~\cite{Ostrovsky90}, the definition of ORAM presented in Goldreich and Ostrovsky~\cite{GoldreichO96} postulates an alternative security requirement.
However, the alternative definition suffers from an issue which is not present in the original definition and which, to the best of our knowledge, was not pointed out in the literature.
In particular, the definition in~\cite{GoldreichO96} can be satisfied by a dummy ORAM construction with only a constant overhead and without achieving any indistinguishability of the access sequences.
Given that Goldreich and Ostrovsky~\cite{GoldreichO96} might serve as a primary reference for our community, we explain the issue in the following paragraph to help preventing the use of the problematic definition in future works.

Recall the definition of ORAM with perfect security from Goldreich and Ostrovsky (Definition 2.3.1.3 in \cite{GoldreichO96}):

\medskip
\noindent\textbf{Goldreich-Ostrovsky security:}\emph{ For any two input sequences $y$ and $y'$, if the length distributions $|A(\M, y)|$ and $|A(\M, y')|$ are identical, then $A(\M, y)$ and $A(\M, y')$ are identical.}

\medskip
As we show, this requirement can be satisfied by creating an ORAM that makes sure that on any two distinct sequences $y,y'$, the length distributions $|A(\M, y)|$ and $|A(\M, y')|$ differ. Note that no indistinguishability is required in that case and the ORAM can then reveal the access pattern of the input sequence.

To this end, we describe an ORAM with a constant overhead so that $|A(\M, y)| \in \{2|y|, 2|y|+1\}$ and the distribution $|A(\M, y)|$ encodes the sequence $y$.
The ORAM proceeds by performing every operation $y_i$ directly on the server followed by a read operation from address 1.
After the last instruction in $y$, the ORAM selects a random sequence of operations $r$ of length $|y|$ and if $r$ is lexicographically smaller than $y$ then the ORAM performs an extra read from address 1 before terminating.
Note that this ORAM can be efficiently implemented using constant amount of internal memory by comparing the input sequence to the randomly selected one online. 
Also, the machine does not need to know the length of the sequence in advance.
Finally, the length distribution $ |A(\M, y)| $ is clearly different for each input sequence $ y $.
Given that the above definition of ORAM of Goldreich and Ostrovsky allows the dummy construction with a constant overhead, we do not hope to extend our lower bound towards this definition.

One could object that the above dummy ORAM exploits the fact that indistinguishability of access sequences must hold only if the length distributions are identical.
However, it is possible to construct a similar dummy ORAM with low overhead satisfying even the following relaxation of the definition requiring indistinguishability of access sequences corresponding to any pair of $y$ and $y'$ for which $|A(M,y)|$ and $|A(M,y')|$ are statistically close (i.e., the indistinguishability is required for a potentially larger set of access patterns):

\bigskip
\noindent \textbf{Relaxation of Goldreich-Ostrovsky security:} \emph{For any two input sequences $y$ and $y'$, if the length distributions $|A(\M, y)|$ and $|A(\M, y')|$ are statistically close, then $A(\M, y)$ and $A(\M, y')$ are statistically close.}

\medskip
We show there is a dummy ORAM $\M$ with a constant overhead such that for any two input sequences $y$ and $y'$ which differ in their accessed memory locations, the statistical distance $\SD{|A(\M, y)|}{|A(\M, y')|}$ is at least $\frac{1}{n M}$ (where $n=|y|=|y'|$ and $M$ is the size of address range).

The ORAM $\M$ works as follows.
At the beginning, the ORAM picks $i\in [n]$ and $r\in [M]$ uniformly at random.
Then for $j=1,\dots n$, it executes each of the input operations $(o_j, a_j, d_j)$ directly on the server.
For each $j<i$, it performs two additional reads from address 1 after executing the $j$-th input operation.
For $j=i$, after the $i$-th input operation it performs two additional reads from address 1 if $r \le a_i$, and it performs one additional read from address 1 if $r>a_i$. 
For $j>i$, it performs each of the input operations without any additional read.

It is straightforward to verify that the distribution of $|A(\M, y)|$ satisfies: for each $i\in[n]$, $\Pr[|A(\M, y)|=n+2i]=\frac{a_i}{nM}$.
Hence, for any pair $y$ and $y'$ of two input sequences of length $n$, if the sequences of addresses accessed by them differ then 
the statistical distance between the distributions of $|A(\M, y)|$ and $|A(\M, y')|$ is at least $1/nM$. If $M$ is polynomial in $n$ this means that their distance is at least $\frac{1}{\mathrm{poly}(n)}$. 
%
Thus, $\M$ satisfies even the stronger variant of the definition from \cite{GoldreichO96} even though its access sequence leaks the addresses from the input sequence.

It was previously shown by Haider, Khan and van~Dijk~\cite{HaiderD17} that there exists an ORAM construction which reveals all memory accesses from the input sequence while satisfying the definition of Goldreich and Ostrovsky from~\cite{GoldreichO96}.
However, their construction has an \emph{exponential} bandwidth overhead which makes it insufficient to demonstrate any issue with the definition of Goldreich and Ostrovsky.
Clearly, any definition of ORAM can disregard constructions with super-linear overhead as a perfectly secure ORAM (with linear overhead) can be constructed by simply passing over the whole server memory for each input operation.
Unlike the construction of \cite{HaiderD17}, our constructions of the dummy ORAMs with constant bandwidth overhead exemplify that the definition of Goldreich and Ostrovsky from~\cite{GoldreichO96} is problematic in the interesting regime of parameters.

\paragraph{Simulation-based definitions.}
The recent work of Asharov~et~al.~\cite{AsharovKLNS18} employs a simulation-based definition parameterized by a functionality which implements an oblivious data structure.
Our lower bounds directly extend to their stronger definition when the functionality implements Array Maintenance.
Moreover, our techniques can be adapted to give lower bounds for functionalities implementing stacks, queues and others considered in \cite{JacobLN19}.

\paragraph{Weak vs. strong computational security.}
In this work, we distinguish between weak and strong computational security (see Definition~\ref{def:oram}).
Our techniques do not allow to prove matching bounds for ORAMs satisfying the two notions and we show $\Omega(\log n)$ lower bound only w.r.t. strong computational security. 
Though, as we noted in Section~\ref{sec:OurResults}, even the $\omega(1)$ lower bound for online ORAMs satisfying weak computational security is an interesting result in the light of the work of Boyle and Naor~\cite{BoyleN16}.
It follows from~\cite{BoyleN16} that any super-constant lower bound for \emph{offline} ORAM would imply super-linear lower bounds on size of sorting circuits -- which would constitute a major breakthrough in computational complexity.
The main result from Boyle and Naor \cite{BoyleN16} can be rephrased using our notation as follows.
\begin{theorem}[Theorem 3.1 \cite{BoyleN16}]
Suppose there exists a Boolean circuit ensemble $C = \{C( n,w)\}_{n,w}$
of size $s(n,w)$, such that each $C(n,w)$ takes as input $n$ words each of size
$w$ bits, and outputs the words in sorted order.  Then  for word  size $w \in \Omega(\log n) \cap n^{o(1)}$ 
and constant internal memory $m \in \mathcal{O}(1)$, there  exists a secure offline ORAM
(as per Definition 2.8~\cite{BoyleN16}) with total bandwidth and computation $\mathcal{O}(n \log w + s(2 n/w,w))$.
\end{theorem}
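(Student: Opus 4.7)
The plan is to implement an offline ORAM by reducing array maintenance to two invocations of the sorting circuit on compressed inputs, sandwiched around a cheap local scan. Since the ORAM is offline, the entire input sequence $y_1, \dots, y_n$ is known in advance. First I would annotate each input operation with its timestamp $i \in [n]$ and pack groups of roughly $w/2$ consecutive operations into single $w$-bit records, producing approximately $2n/w$ records overall. Because the word size satisfies $w \in \Omega(\log n)$, each record has enough room to store several (type, address, timestamp, data) tuples, and the packing itself is a fixed-pattern streaming operation usable with $O(1)$ internal memory.

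Next, I would invoke the sorting circuit $C(2n/w, w)$ on these records, using (address, timestamp) as the sort key; this contributes $s(2n/w, w)$ to the total bandwidth. After this sort, all operations touching the same memory cell appear consecutively and in original timestamp order, so a single left-to-right scan suffices to resolve every read: the value it returns is the value of the most recent preceding write to the same address, with a sentinel default for cells never written. Because each record internally bundles $\Theta(w)$ operations but internal memory is only $O(1)$, I would use a fixed oblivious network of depth $O(\log w)$ inside each record to propagate the running write value across its slots, at cost $O(w \log w)$ per record, or $O(n \log w)$ in aggregate across all $2n/w$ records.

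Finally, I would apply the sorting circuit a second time to re-sort the records back into timestamp order so that the read answers are emitted in the order the operations were originally issued; this contributes another $s(2n/w, w)$ term, giving the claimed bandwidth bound $\mathcal{O}(n \log w + s(2n/w, w))$. Security follows because the access pattern is determined entirely by the fixed sorting network, the fixed packing, and the fixed intra-record network, and is therefore independent of $y$. The main technical obstacle is the intra-record propagation: one must realize the scan obliviously at the word level with only $O(1)$ internal memory while still correctly honoring the timestamp ordering of writes that land in the same record or straddle record boundaries. I expect to handle within-record propagation by a depth-$O(\log w)$ oblivious prefix network over the slots, and cross-record propagation by a short deterministic header attached to each record that carries the last-write-per-address information forward during the post-sort scan.
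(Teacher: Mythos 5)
The paper you are working from does not actually prove this statement: it is quoted as Theorem~3.1 of Boyle and Naor~\cite{BoyleN16} and used as a black box, and the only hint given about its proof is that the additive $\mathcal{O}(n \log w)$ term comes from an oblivious matrix-transposition step (Figures~1 and~2 of~\cite{BoyleN16}, with Remark~B.3 reducing it to $\mathcal{O}(n)$ when $m \geq w$). Measured against that, your route is different, and it breaks at its very first step. You propose to pack roughly $w/2$ consecutive operations into a single $w$-bit record, producing $2n/w$ records. But each operation $(o,a,d)$ carries $w$ bits of data $d$ alone, plus an address and a timestamp of $\Theta(\log n)$ bits each, so a single operation already occupies strictly more than $w$ bits; packing $w/2$ of them into one $w$-bit word would compress $\Theta(w^2)$ bits into $w$ bits, which is information-theoretically impossible. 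The hypothesis $w \in \Omega(\log n)$ only guarantees that the \emph{metadata} (type, address, timestamp) of $O(1)$ operations fits in a word; it does nothing for the data. Moreover, even granting the packing, sorting multi-operation records ``by (address, timestamp)'' is ill-defined, since one record would contain many distinct addresses and cannot occupy a single position in an address-sorted order.

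This is not a cosmetic issue, because the entire content of the theorem is that the sorting circuit need only be invoked on $2n/w$ words of $w$ bits (about $2n$ bits in total), which is precisely what makes a linear-size sorting circuit yield an offline ORAM of total bandwidth $\mathcal{O}(n)$ --- the consequence for which the paper quotes the theorem. If you repair your argument in the obvious way, with one $O(1)$-word record per operation, sorted by $(a,i)$, scanned, and re-sorted by $i$, you obtain a bound of the form $\mathcal{O}\left(s(\Theta(n),\Theta(w))\right)$, which is weaker by roughly a factor of $w$: with linear-size sorting circuits it gives $\mathcal{O}(nw)$ rather than $\mathcal{O}(n)$, so the stated bound and its intended corollary are not recovered. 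Getting the circuit input down to $2n/w$ words is exactly where the real work lies in~\cite{BoyleN16}; their mechanism is an oblivious bit-level repacking/transposition of the data (the source of the $\mathcal{O}(n \log w)$ term), not a per-record prefix network over records obtained by an impossible compression. Your sort--scan--sort skeleton and the security reasoning (a fixed circuit topology yields a data-independent access pattern) are sound in themselves, but as written the proposal does not prove the statement.
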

Moreover, the additive factor of $\mathcal{O}(n \log w)$ follows from the transpose part of the algorithm of \cite{BoyleN16} (see Figures~1 and 2 in \cite{BoyleN16}).
As Boyle and Naor showed in their appendix (Remark B.3~\cite{BoyleN16}) this additive factor in total bandwidth may be reduced to $\mathcal{O}(n)$ if the size of internal memory is $m \geq w$.
Thus, sorting circuit of size $\mathcal{O}(nw)$ implies offline ORAM with total bandwidth $\mathcal{O}(n+2\frac{n}{w}w) = \mathcal{O}(n)$.
Or the other way around, lower bound $\omega(n)$ for total bandwidth of offline ORAM implies $\omega(nw)$ lower bound for circuits sorting $n$ words of size $w$ bits, each.

We leave it as an intriguing open problem whether it is possible to prove an $ \Omega(\log n) $ lower bound for online ORAMs satisfying weak computational security.

\section*{Acknowledgements}
We wish to thank Oded Goldreich for clarifications regarding the ORAM definitions in \cite{Goldreich87,Ostrovsky90,GoldreichO96} and Jesper Buus Nielsen for clarifying the details of the lower bound for computationally secure ORAMs from \cite{LarsenN18}.
We are also thankful to the anonymous TCC 2019 reviewers for insightful comments that helped us improve the presentation of our results.

\newcommand{\etalchar}[1]{$^{#1}$}

\end{document}